\documentclass{article}
\usepackage[english]{babel}
\usepackage[letterpaper,top=2cm,bottom=2cm,left=3cm,right=3cm,marginparwidth=1.75cm]{geometry}

\usepackage{authblk}
\usepackage{algorithm,algorithmic}

\usepackage{amsmath,amscd,amsbsy,amssymb,latexsym,url,bm,amsthm}
\usepackage{dsfont}
\usepackage{color,xcolor,natbib}
\usepackage{graphicx}

\usepackage{tabularx}
\usepackage{times}
\usepackage{threeparttable}
\usepackage{multirow}
 \usepackage{booktabs}
\usepackage{amsmath,amscd,amsbsy,amssymb,latexsym,url,bm,natbib}
\allowdisplaybreaks
\usepackage{cleveref}
\usepackage{verbatim}
\usepackage{color}
\usepackage{dsfont}
\usepackage{caption}
\usepackage{algorithm}
\usepackage{algorithmic}

\newtheorem{definition}{Definition}
\newtheorem{lemma}{Lemma}
\newtheorem{theorem}{Theorem}

\usepackage{thmtools}
\usepackage{thm-restate}
\usepackage{mathtools}

\newcommand{\EE}[1]{\mathbb{E} \left[#1\right]}

\newcommand{\cF}{\mathcal{F}}

\newcommand{\cS}{\mathcal{S}}

\newcommand{\abs}[1]{\left| #1 \right|}
\newcommand{\bOne}[1]{\mathds{1} \! \left\{#1\right\}}
\newcommand{\bracket}[1]{\left(#1\right)}

\newcommand{\set}[1]{\left\{ #1 \right\}}
\newcommand{\PP}[1]{\mathbb{P} \left(#1\right)}

\newcommand{\etal}{\emph{et al.}}

\newcommand{\opt}{\mathrm{opt}}

\newif\ifsup\supfalse
\suptrue

\newcommand{\oracle}{\mathrm{Oracle}}

\DeclareMathOperator*{\argmax}{argmax}

\begin{document}

\title{Online Influence Maximization under Decreasing Cascade Model}

\author[1]{Fang Kong}
\author[1]{Jize Xie}
\author[2]{Baoxiang Wang}
\author[3]{Tao Yao\thanks{Work done at Alibaba Group, and now affiliated with the Chinese University of Hong Kong, Shenzhen.}}
\author[1]{Shuai Li \thanks{Corresponding author.}}

\affil[1]{Shanghai Jiao Tong University}
\affil[2]{The Chinese University of Hong Kong, Shenzhen}
\affil[3]{Alibaba, Inc.}

\affil[ ]{\{fangkong,xjzzjl,shuaili8\}@sjtu.edu.cn, bxiangwang@cuhk.edu.cn, tao.yao@alibaba-inc.com}
\date{}

 \maketitle

\begin{abstract}%
We study online influence maximization (OIM) under a new model of decreasing cascade (DC). This model is a generalization of the independent cascade (IC) model by considering the common phenomenon of market saturation. In DC, the chance of an influence attempt being successful reduces with previous failures. The effect is neglected by previous OIM works under IC and linear threshold models. 
We propose the DC-UCB algorithm to solve this problem, which achieves a regret bound of the same order as the state-of-the-art works on the IC model. 
Extensive experiments on both synthetic and real datasets show the effectiveness of our algorithm.
\end{abstract}

\section{Introduction}\label{sec:intro}

The study of information diffusion on social networks has received increasing attention from the community of machine learning, data mining, and graph algorithms.
A term to be diffused has many forms, including the spread of news and opinions, adoption of products, and broadcast of alarms.
To characterize this process, many influence propagation models have been proposed. 
Among them, the independent cascade (IC) model and the linear threshold (LT) model are widely adopted \citep{chen2013information,Kempe2003,wortman2008viral,gruhl2004information,chen2010scalableLT}. 
The common ground of all these models is to use a weighted graph to represent a social network, where the weights denote influence abilities between nodes. 
For example, the IC model assumes that the term transmits through each edge independently with a probability equal to the weight; under LT, the transmission happens when the cumulative weight of in-neighbors exceeds a certain threshold.
Despite the popularity, both IC and LT neglect the decay characterization, which is common in many real applications 
\citep{kempe2005influential,wortman2008viral,steeg2011stops,zhang2016influence}. 
This decay property reflects the phenomenon of {\it market saturation} where more failed influence attempts would turn the market to be more saturated, making subsequent influence trials less likely to succeed.
The decreasing cascade (DC) model \citep{kempe2005influential} is a generalization of IC that depicts these observations for better feasibility.

When the term is the alarm of an emergency or a broadcast of an important notification, the relevant party has a crucial responsibility to spread it to the possible extent under limited resources. Similarly, the marketing team of a company advertises products on social networks, aiming to attract as many users as possible \citep{wortman2008viral,Kempe2003}. 
These real tasks motivate the problem of influence maximization (IM) \citep{Kempe2003,IMM2015,wang2012scalableIC,chen2010scalableLT}.
Given a graph and its underlying influence propagation model together with the model parameters (the graph weights), it desires to find an initial set of adopted users (the seed set) to maximize the influence spread. The IM problem has been widely studied under IC \citep{wang2012scalableIC,jung2012irie}, LT \citep{chen2010scalableLT,goyal2011simpathLT} and the DC model \citep{kempe2005influential}.

A major concern of IM is that in real applications, the parameters of the influence propagation models are usually unknown. For example in advertisement placing, a company might not know the actual influence probabilities before they place the advertisement. 
A heuristic to address this issue is to estimate the unknown parameters from the collected past observations \citep{netrapalli2012learning,goyal2010learning}. However, there might not exist sufficient logs, and even if they exist, the logs have biases. 
Also, such estimates cannot adapt to any change in the social network. 
Online influence maximization (OIM), instead, learns the unknown parameters through the iterative interactions with the social network and progressively finds the optimal seed set \citep{lei2015online,WeiChen2016,WeiChen2017,zhengwen2017nips,IMFB2019,Model-Independent2017,li2020online}. 
Previous theoretical OIM studies mainly focus on the IC and LT models \citep{WeiChen2013,WeiChen2017,WeiChen2016,zhengwen2017nips,IMFB2019,Sharan2015-nodelevel,li2020online}. Despite their importance, these model assumptions fail to characterize the common damping phenomenon of market saturation in influence spreading \citep{zhang2016influence,wortman2008viral,kempe2005influential,steeg2011stops}. Thus, how to design efficient algorithms on the more general DC setting remains an open problem.

We are the first to formulate the OIM problem under the DC model.
In this paper, we propose an upper confidence bound (UCB)-based algorithm, DC-UCB, to solve this problem. 
The algorithm meticulously readjusts the UCB indices of activation probabilities to maintain the decreasing property of DC.
Through careful analysis of the information diffusion process under DC, we prove a DC-based triggering probability modulated (TPM) bounded smoothness condition, as an analogy to that under IC \citep{WeiChen2017,zhengwen2017nips}. 
We can then provide rigorous theoretical guarantees on the regret of DC-UCB. 
The regret upper bound of DC-UCB achieves the same order as the state-of-the-art results under IC. 
Extensive experiments on both synthetic and real-world networks show the effectiveness and efficiency of our algorithm.

\section{Related Work}

The (offline) IM problem was formulated by \cite{Kempe2003}.
It conducts discrete optimization on the seed set to maximize the influence spread on graphs when given parameters of the underlying diffusion model. Since then, this problem has attracted a lot of attention \citep{kempe2005influential,chen2010scalableLT,wang2012scalableIC,IMM2015,zhang2016influence} focusing on different underlying diffusion models and different solving techniques.
The adaptive influence maximizationn (AIM) problem is a variant of IM where the agent can adaptively select seed nodes after it observes the propagation results of previously selected seeds \citep{han2018efficient,cautis2019adaptive}. Both problems assume that the diffusion parameters are known beforehand.

When the parameters of the diffusion model are unknown, the problem can be solved through online IM (OIM), aiming to learn the parameters through the interactions with the social network. 
The framework of OIM can be formulated as a problem of combinatorial multi-armed bandits (CMAB) \citep{WeiChen2013,WeiChen2016,WeiChen2017} - a $T$-round game between the learning agent and the environment to maximize the cumulative reward. 
In each round, the learning agent executes a combination of base arms, defined as a super arm, and observes the feedback (i.e. the influence propagation in OIM).
Based on the collected feedback, it then updates its knowledge for the unknown environmental parameters and improves the subsequent choices.

Chen $\etal$ are the first to use the CMAB framework with probabilistically triggered arms to study the OIM problem under the IC model with edge-level feedback 
\citep{WeiChen2013,WeiChen2016,WeiChen2017}. In this formulation, each edge is regarded as a base arm and all outgoing edges from the seed set are regarded as a super arm. 
The line of studies proposes a CUCB algorithm based on the canonical upper confidence bound (UCB) algorithm \citep{auer2002finite} and derives rigorous guarantees of it \citep{WeiChen2017}.
To generalize it to large-scale real applications, \cite{zhengwen2017nips} present a linear variant for the activation probabilities and propose the IMLinUCB algorithm.  
\cite{IMFB2019} consider the network assortativity and assume the activation probability of each edge can be decomposed by the influence factor of the source node and the susceptibility factor of the destination node to reduce the sample complexity. 
Node-level feedback, which needs less information and is more realistic than the edge-level feedback, has also been considered under the IC model \citep{Sharan2015-nodelevel}. It assumes that the identities of influenced nodes (instead of transmitted edges) can be observed, and provides a bound on 
estimation gap of the activation probabilities between node-level feedback and edge-level feedback.
Recently, \cite{zhang2022online} give a regret upper bound for this challenging feedback.

A few OIM works consider different diffusion models. 
Until recently, the OIM problem under the LT model was solved by \cite{li2020online}. This work assumes the full node-level feedback, the influence status of each node in each diffusion step, can be observed and gives the first regret upper bound under this model.
Another work \citep{Model-Independent2017} considers a pairwise feedback scheme, where the agent can directly observe the influence status between each node and each seed node. Though the setting can be applied to many diffusion models like IC, LT, and DC, there are no optimality guarantees for their heuristically proposed objective function. Our work is the first one to study the OIM problem under the DC model with rigorous theoretical guarantees.

\section{Setting}\label{sec:setting}

In this section, we formulate the OIM problem under the DC model. 
The social network is the basic structure of this problem, which is usually represented by a directed graph $G=(V,E)$ with the node set $V$ and the edge set $E$ denoting the set of users and the set of relationships between users, respectively. An edge $e = (u,v) \in E$, for example in Twitter, can correspond to the relationship of user $v$ following user $u$ and the information transmitting from $u$ to $v$. Let $n = \abs{V}$ and $m=\abs{E}$ be the number of nodes and edges, respectively. For each node $v \in V$, denote $N(v) := N^{\mathrm{in}}(v)$ as the set of all incoming neighbors of $v$, abbreviated as in-neighbors. 

The IC and LT models \citep{Kempe2003} are two of the most common and widely used influence propagation models in characterizing the information diffusion on social networks \citep{wang2012scalableIC,chen2010scalableLT,li2020online}. 
Under the IC model, each node will try to activate all its inactive out-neighbors independently right after it is activated. 
The success probability of activation attempt between every such pair is equivalent to the weight of this edge. 
Under LT, a node is activated if the sum of edge weights from its active in-neighbors exceeds a certain threshold. 

However, the information diffusion can be very complicated in real applications. One of the main considerations is the famous effect of {\it market saturation} in real life \citep{kempe2005influential,wortman2008viral,steeg2011stops,zhang2016influence,leskovec2007dynamics,kossinets2006empirical}. As the information spreads more around the neighborhood, users usually become more saturated with the marketing and their in-neighbors will have diminishing influence effects on them. For example in the diffusion of a news story, the story would become more redundant and unattractive every time the user sees it from the broadcasts of the neighbors but expresses no interest, making the adoption probability decrease with the time of exposures \citep{wu2007novelty,hogg2009stochastic,myers2012information}. 
Such characterization of influence diffusion is known as the decay property \citep{zhang2016influence,wortman2008viral,steeg2011stops}, which is not fully covered in the common IC and LT models but can be captured in the DC model \citep{kempe2005influential}.

The information diffusion process of the DC model starting from the seed set $S$ is described as follows. Define $S_{\tau}$ as the set of influenced nodes until the end of time step $\tau$. In the beginning when $\tau=0$, only nodes in $S$ are influenced, that is, $S_0 = S$. 
Then after time step $\tau \ge 0$ for each inactive node $v \in V\setminus S_{\tau}$, all of its active in-neighbors who are influenced at the last time step, i.e. elements in $N(v) \cap \bracket{S_{\tau}\setminus S_{\tau-1} }$, will make an attempt to activate $v$ in an arbitrary order (denote $S_{-1} = \emptyset$ for consistency). 
Specifically, each node $u \in N(v) \cap \bracket{S_{\tau} \setminus S_{\tau-1} }$ tries to activate $v$ with probability $p_v(u, S')$, where $S'$ is the set of nodes that have already tried but failed to activate $v$ in all previous steps ($<\tau$) and the current step ($=\tau$).
If there exists an $u$ who successfully activates $v$ at $\tau+1$, then $v$ becomes active, or equivalently $v \in S_{\tau+1}\setminus S_{\tau}$; otherwise if all the nodes in $N(v) \cap \bracket{S_{\tau}\setminus S_{\tau-1} }$ fail to activate $v$, then $v$ is still inactive at $\tau+1$, or equivalently $v \notin S_{\tau+1}$. 
Such an information diffusion process ends when no node is activated at a new step. The {\it influence spread} $r(S,p)$ is defined as the expected number of total influenced nodes under the seed set $S$ and the activation probabilities $p = (p_v(u, S'))_{v \in V, u \in N(v), S' \subseteq N(v) \setminus \set{u}}$.
Here the expectation is taken over the randomness in the diffusion process, specifically the success or failure events of all activation attempts.

The activation probabilities $p$ under the DC model satisfy the following two mild but important properties.\\
\indent \textbf{Decreasing. } The activation probability of $u \in N(v)$ on $v$ decreases with more previous failed attempts. Specifically, if $S' \subseteq S'' \subseteq N(v) \setminus \set{u}$, then $p_v(u,S')\ge p_v(u,S'')$.\\
\indent \textbf{Order-independence. } The probability that $v$ is eventually influenced by the set $S' = \set{u_1,u_2,\ldots,u_\ell} \subset N(v)$ does not depend on the order of these nodes' activation attempts. That is, the probabilities that $S'$ successfully activate $v$ in order of $u_1,u_2,\ldots,u_\ell$ and $u_\ell,u_{\ell-1},\ldots,u_1$ are the same.

With seed set cardinality $K$, denote $\cS = \set{S \subset V: \abs{S} \le K}$ as the action set which consists of all feasible seed sets with size smaller than $K$. When the activation probability vector $p$ is known, the (offline) IM problem aims to find an $S \in \cS$ with the maximum influence spread $\argmax_{S\in \cS} r(S,p)$. This problem under the DC model is NP-hard but can be approximately solved with a greedy algorithm, since the influence spread function is monotone and submodular \citep{kempe2005influential}. 
We use $\opt_p = \max_{S \in \cS}r(S,p)$ and $S_p^{\opt} \in \argmax_{S \in \cS} r(S,p)$ to denote the maximum influence spread and an optimal seed set, respectively. Let $\oracle$ be an offline IM algorithm that outputs a seed set given the activation probabilities $p$. For $\alpha, \beta \in [0,1]$, we say $\oracle$ is an $(\alpha,\beta)$-approximation if its output $S^\ast = \oracle(p)$ satisfies $\PP{r(S^\ast,p)\ge \alpha \cdot \opt_p} \ge \beta$ for any input $p$. 

It is worth pointing out that the IC model is a special case of the DC model since it satisfies that $p_v(u,S') = p_{u,v}$ for any edge $(u,v) \in E$ and the two properties of the DC model can be verified easily. In this paper, we consider another special case of DC where $p_v(u,S')$ only depends on the size of $S'$ and node $v$, not on $u$ and the elements in $S'$. 
It characterizes that the probability of a node being influenced depends on the susceptibility of the node itself \citep{watts2007influentials} and the previous failed attempts. This setting keeps the most important decay property to describe the real-world phenomena of market saturation.
Specifically, each node $v$ is associated with a decreasing probability sequence of size $\abs{N(v)}$, denoted as $p_v := [p_{v}(1),p_{v}(2), \ldots, p_{v}(\abs{N(v)})]$. Then the probability that $u$ successfully activates $v$ after the attempts of the nodes in $S'$ is $p_v(u, S') = p_{v}(\abs{S'}+1)$.
The activation probability vector can thus be written as $p := (p_v(i))_{v \in V, i\in [\abs{N(v)}]}$ and it is immediate to verify that two properties of the DC model hold. 
Note that if for each node $v$, its related activation probabilities are the same, or $p_v(i)\equiv p_v'$ for all $i \in [|N(v)|]$, the diffusion process under this specific DC model is the same with that under IC.

In the online version where the activation probability vector $p$ is unknown, the problem aims to learn those unknown probabilities from the interactions with the social network and to gradually identify the optimal seed set. In each round $t$, the learning agent selects a seed set $S_t \in \cS$. Then the diffusion process originating from $S_t$ could reveal some influence propagation, based on which the agent could get some information about the unknown parameters.
Similar to most OIM works, we consider the (partial) edge-level feedback where an edge is observed only when its start node is active and the end node is inactive. Recall that the (full) edge-level feedback assumes an edge to be observed if its start node is active \citep{WeiChen2017,WeiChen2016,zhengwen2017nips,IMFB2019}. Our (partial) edge-level feedback requires less information since each node is activated at most once. It is more reasonable that the following activation attempts on a node after it is activated are not supposed to be observed.

With an $(\alpha,\beta)$-approximation oracle, the objective of the learning agent is to maximize the $T$-round cumulative influence spread, or equivalently to minimize the cumulative $\alpha\beta$-scaled regret \citep{WeiChen2016,WeiChen2017,zhengwen2017nips,IMFB2019} over $T$ rounds
\begin{align}
	R(T) = \mathbb{E}\bigg[\sum_{t=1}^T r(t)\bigg] =  \mathbb{E}\bigg[\sum_{t=1}^T \bracket{\alpha \beta \cdot \opt_p - r(S_t,p)} \bigg]\,,
\end{align} 
where $r(t)$ is the regret at $t$ and the expectation is taken over the randomness in diffusion processes and the adopted oracle.

\section{The DC-UCB Algorithm}\label{sec:DCUCB}

In this section, we introduce DC-UCB (Algorithm \ref{alg:DCUCB}), a UCB-type algorithm, to solve the OIM problem under the DC model with (partial) edge-level feedback.

The DC-UCB algorithm takes the graph $G=(V,E)$, the seed set cardinality $K$ as well as an offline $\oracle$ as input. For each $p_v(i)$ that represents the success probability of the $i$-th activation attempt on node $v$, the algorithm maintains its empirical mean $\hat{p}_v(i)$ and the number of observations $T_v(i)$.

\begin{algorithm}[thb!]
\caption{DC-UCB}\label{alg:DCUCB}
\begin{algorithmic}[1]
\STATE \textbf{Input:} Graph $G=(V,E)$; seed set size $K$; $\oracle$
\STATE \textbf{Initialize}: $\hat{p}_v(i)=0, T_{v}(i) = 0$, for $v \in V, i \in [\abs{N(v)}]$
\FOR {$t = 1,2, \ldots$}
	\FOR {each node $v \in V$, $i =  1,2,3,\ldots,\abs{N(v)}$} \label{alg:newUCB:start}
			\STATE $\bar{p}_v(i) = \text{Proj}_{[0,1]}\bracket{ \hat{p}_v(i) +  \sqrt{\frac{3 \log t}{2 T_v(i)}} }  $ \label{alg:initialUCB}
			\STATE $\bar{p}'_v(i) = \min \set{\bar{p}'_v(i-1), \bar{p}_v(i)}$ \quad ($\bar{p}'_v(0)=1$) \label{alg:scaleUCB}
	\ENDFOR \label{alg:newUCB:end}
	\STATE Choose $S_t = \oracle(G,K,\bar{p}')$ and observe feedback \label{alg:UCB:choose}
	\FOR {each status $Y_v(i)$ of all attempts}
		\STATE $\hat{p}_v(i) = \frac{T_v(i)\cdot \hat{p}_v(i) + Y_v(i)}{T_v(i)+1}$; $T_v(i) = T_v(i)+1$ \label{alg:UCB:update}
	\ENDFOR
\ENDFOR
\end{algorithmic}
\end{algorithm}

In each round $t$, the learning agent first computes the UCB $\bar{p}_v(i)$ for each activation probability $p_v(i)$ based on collected observations (line \ref{alg:initialUCB}). The computation of the UCBs are based on the Chernorff-Hoeffding inequality \citep{hoeffding1963probability} to guarantee the value is an upper bound of the true value with high probability, which is applicable here since observations on the same term $p_v(i)$ in different rounds are independent. 
The operating $\text{Proj}_{[0,1]}(\cdot)$ projects a real number into interval $[0,1]$ to ensure the UCBs of probabilities still fall into this interval. Specially, if $T_v(i)=0$ for $i$-th activation probability of node $v$, we simply set $\bar{p}_v(i)=1$.
Since the decreasing property of the DC model guarantees that the $i$-th real probability always larger than the $i+1$-th real probability for any node $v$, 
these UCB indices are then capped to maintain the decreasing property (line \ref{alg:scaleUCB}). 

With the capped UCBs $\bar{p}'$, graph $G$ and seed set size $K$ as input, the offline $\oracle$ computes a seed set $S_t$ (line \ref{alg:UCB:choose}). 
The returned solution automatically balance the exploitation and exploration: If all activation probabilities are observed enough, then their UCBs are roughly the empirical means and the $\oracle$ will return a solution that is approximately optimal under the estimated weights, whose value is close to the one under true probabilities since the influence spread is continuous in vector $p$;
if some activation probabilities are not observed enough, their confidence interval would be wide and their UCBs would be high, making the graph lean towards these less-explored parts and thus forcing the exploration. 

Then the influence spreads from the selected seed set $S_t$. And the agent can observe a binary variable $Y_v(i)$ if there is the $i$-th attempt to activate $v$,
where $1$ represents the successful activation and $0$ represents failure.
With the $Y_v(i)$, the corresponding $\hat p_v(i)$ is updated (line \ref{alg:UCB:update}).

\subsection{Regret Bounds}

The following theorem shows the problem-independent regret bound for our algorithm DC-UCB.

\begin{theorem}\label{thm:UCB:free}
	The $(\alpha,\beta)$-scaled regret of DC-UCB satisfies
	\begin{align}
    R(T) &\le O\bracket{nm\sqrt{T\log T}}\,.
	\end{align}
\end{theorem}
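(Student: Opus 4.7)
The plan is to adapt the CMAB-with-probabilistically-triggered-arms (CMAB-T) framework to the DC setting. The proof has three pillars: (i) a concentration-based clean event under which the capped UCBs $\bar p'$ dominate the true probabilities $p$ coordinatewise, (ii) a monotonicity-plus-oracle argument that reduces the per-round regret to $r(S_t, \bar p') - r(S_t, p)$, and (iii) a DC-adapted triggering-probability modulated (TPM) bounded smoothness inequality that converts this gap into a weighted sum of confidence widths, which is then handled by a standard pigeonhole counting argument.

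First, I would define the clean event
\begin{align}
\cE = \bigcap_{t,v,i} \set{\abs{\hat p_v(i) - p_v(i)} \le \sqrt{3\log t / (2 T_v(i))}},
\end{align}
and show, via Chernoff--Hoeffding applied to the i.i.d.\ Bernoulli observations of each $p_v(i)$ together with a union bound over $(t, v, i, T_v(i))$, that $\PP{\cE^c}$ is summable in $t$. This caps the total regret contribution from $\cE^c$ by $O(n)$ since each round loses at most $n$ reward. Under $\cE$, the raw UCB satisfies $\bar p_v(i) \ge p_v(i)$; the capping step (line~\ref{alg:scaleUCB}) preserves this because the true sequence $p_v(\cdot)$ is itself non-increasing, so $\bar p'_v(i) \ge p_v(i)$ for every $v, i$. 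The influence spread $r(S, \cdot)$ is monotone non-decreasing in its probability argument (by a coordinatewise coupling of the DC diffusion processes that exploits the two properties in Section~\ref{sec:setting}), which yields $\opt_{\bar p'} \ge \opt_p$, and the $(\alpha, \beta)$-approximation oracle then gives $\EE{r(S_t, \bar p') \mid \bar p'} \ge \alpha\beta\cdot \opt_p$. Hence, up to the negligible $\cE^c$ contribution, the per-round regret is bounded by $\EE{r(S_t, \bar p') - r(S_t, p)}$.

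The main technical step, which I expect to be the principal obstacle, is establishing a TPM bounded smoothness lemma of the form
\begin{align}
r(S, p') - r(S, p) \;\le\; B\sum_{v \in V}\sum_{i \in [\abs{N(v)}]} q^{S,p}_{v,i}\cdot \abs{p'_v(i) - p_v(i)},
\end{align}
whenever $p' \ge p$ coordinatewise, where $q^{S,p}_{v,i}$ is the probability that the $i$-th attempt on $v$ actually occurs during the diffusion from $S$ under $p$, and $B = O(n)$. I would prove this by a hybrid coupling that replaces the coordinates $p_v(i)\to p'_v(i)$ one at a time in a fixed canonical order, bounding the incremental change by $q^{S,p}_{v,i}\cdot\abs{p'_v(i) - p_v(i)}$ times the maximum downstream influence of a single extra activation, which is at most $n$. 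The DC order-independence property is essential here, as it lets us legally reshuffle activation attempts within a time step without altering the marginal distribution of the eventual active set, and thus makes the one-coordinate-at-a-time swap well-defined. The subtle aspect is that changing $p_v(i)$ may shift the realized value of the attempt-count index used by later attempts on the same node, so the coupling must align the independent Bernoulli sources across $p$ and $p'$ to isolate the effect of a single swap; the decreasing property is what keeps these shifted contributions signed in the right direction.

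Finally, combining the previous steps with the width bound $\abs{\bar p'_v(i) - p_v(i)}\le 2\sqrt{3\log T/(2 T_v(i))}$, the cumulative regret satisfies
\begin{align}
R(T) \;\le\; O(n)\cdot \EE{\sum_{t=1}^T \sum_{v,i} q^{S_t, p}_{v, i}\sqrt{\log T / T_v(i)}} + O(n).
\end{align}
Let $N_t(v,i)$ be the indicator that the attempt $(v,i)$ is triggered in round $t$, so $\EE{N_t(v,i)} = q^{S_t,p}_{v,i}$ and the final count $T_v(i)$ equals $\sum_{t\le T} N_t(v,i)$. Applying the standard CMAB-T pigeonhole argument (bound $\sum_t N_t(v,i)/\sqrt{T_v(i)}=O(\sqrt{T})$ per arm, then sum over the $\sum_v\abs{N(v)} = m$ base arms) controls the inner expectation by $O(m\sqrt{T\log T})$. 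Multiplying by the smoothness constant $O(n)$ produces the claimed $R(T) = O(nm\sqrt{T\log T})$ bound.
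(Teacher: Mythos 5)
Your proposal is correct in substance and follows the same overall architecture as the paper's proof: a Chernoff--Hoeffding clean event, the observation that capping preserves coordinatewise domination $p \le \bar p'$ because the true sequence is itself decreasing (the paper's Lemma~\ref{lem:p:p'}), the monotonicity-plus-oracle reduction of per-round regret to $r(S_t,\bar p'_t)-r(S_t,p)$, a DC-specific TPM smoothness inequality, and a counting argument. Two of your sub-steps, however, take genuinely different routes. For the TPM condition, the paper does not use a one-coordinate-at-a-time hybrid coupling; it couples the two diffusions directly and identifies, on the event that $v$ is influenced under $\bar p$ but not under $p$, the \emph{first} node--attempt pair $(u,i)$ at which the trajectories diverge, bounding that event's probability by $\prod_{j<i}(1-\bar p_u(j))\cdot[\bar p_u(i)-p_u(i)]$ and then dominating the product by $P_{u,i}^S$; the factor $\widetilde V\le n$ enters through summing over $u\in V_{S,v}$. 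Your hybrid argument is the Wang--Chen-style alternative and should also work, and you correctly flag its delicate point (a swap at $(v,i)$ shifts the attempt indices of later arms on the same node, and triggering probabilities of high-index arms are \emph{not} monotone in $p$), which is precisely why the paper's first-divergence argument is arguably cleaner here. For the counting step, the paper imports the triggering-probability group machinery of \cite{WeiChen2017} (partitioning actions into groups $\cS_{v,i}^j$ by $P_{v,i}^S\in[2^{-j},2^{-j+1})$, maintaining group counters, and optimizing a threshold $M$), which it needs anyway for the problem-dependent Theorem~\ref{thm:UCB:dependent}; your tower-property replacement of $q_{v,i}^{S_t,p}$ by the triggering indicator followed by a per-arm pigeonhole is more elementary and suffices for the problem-independent bound, provided you handle the $T_v(i)=0$ rounds (width capped at $1$, contributing $O(1)$ per arm in expectation) and carry the clean event as an indicator rather than a conditioning to keep the martingale identity valid. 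Both routes deliver $O(nm\sqrt{T\log T})$.
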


This is the first theoretical result of the OIM problem under DC, which does not require the strong independence assumption in the IC model and considers the common market-saturation phenomenon in real life. 
Compared with the regret bound of the IC model, our regret achieves the same order in the graph parameters $n,m$ and time horizon $T$ \citep{WeiChen2017}, though strictly speaking the regret bounds under two different models are not directly comparable. Recall that if the activation probabilities satisfy $p_v(i) \equiv p_v'$, the diffusion process would be the same under the IC model and the regret bounds under this case can be directly comparable.

Besides problem-independent bound (Theorem \ref{thm:UCB:free}), we also provide the problem-dependent bound for DC-UCB. 

To get this, define the gap between the influence spread of a seed set $S$ and the $\alpha$-scaled optimal influence spread as
\begin{align*}
	\Delta_S = \max\set{0, \quad \alpha \cdot \opt_p - r(S, p)}\,,
\end{align*}
where $\alpha$ is the approximation factor of the offline $\oracle$.
And for each entry $(v, i)$ with $v \in V,i \in [\abs{N(v)}]$, define $P_{v, i}^{S}$ as the probability that node $v$'s $i$-th attempt can be observed in the diffusion process starting from $S$ under activation probability vector $p$ (here we omit the dependence on $p$ for simplicity). 
Then we can define the arm gap with the aid of such observation probability
\begin{align*}
	\Delta_{\min}^{v,i} = \inf_{S\in \cS:\ \Delta_S>0, \ P_{v,i}^{S}>0} \ \Delta_S
\end{align*}
and take the minimum over all entries
\begin{align*}
	\Delta_{\min} = \min_{v \in V,\ i\in [|N(v)|]} \  \Delta_{\min}^{v,i} \,.
\end{align*}

Similarly to the IC model, 
let 
\begin{align}
\widetilde{V} = \max_{u \in V} \sum_{v \in V}\bOne{\text{there is a path from } u \text{ to } v}
\end{align}
be the maximum number of nodes that a node can reach in $G$. 
With these notations, the problem-dependent regret bound is provided in the next theorem.
\begin{theorem}\label{thm:UCB:dependent}
	The $(\alpha,\beta)$-scaled regret of the DC-UCB algorithm can be bounded as
	\begin{align}
		R(T) \le 
		& \ \sum_{v \in V,\ i \in [|N(v)|]} \frac{576{\widetilde{V}}^2m\ln T}{\Delta_{\min}^{v,i}}   \notag \\
		& \quad + \frac{\pi^2 \Delta_{\max}}{6}\sum_{v \in V,\ i \in [|N(v)|]}\bracket{2+\log \frac{4m\widetilde{V}} {\Delta_{\min}^{v,i}}  }+4m\widetilde{V} \\
		& \ O\bracket{\frac{m^2n^2}{\Delta_{\min}} \ln T}\,.
	\end{align}
\end{theorem}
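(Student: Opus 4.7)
The plan is to follow the triggering-probability-modulated (TPM) CMAB framework (Wang and Chen), specialized to the DC setting via the DC TPM bounded smoothness lemma that the paper establishes earlier. The proof decomposes the regret into (i) rounds on which the concentration event fails, contributing the $\pi^2\Delta_{\max}/6$ term, (ii) the initial phase where some $T_v(i)=0$, contributing the $4m\widetilde V$ term, and (iii) well-concentrated rounds, which give the leading $\widetilde V^2 m\log T/\Delta_{\min}^{v,i}$ term.

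First I would define the nice event $\cN_t = \bigl\{|\hat{p}_v(i)-p_v(i)|\le \sqrt{3\log t/(2T_v(i))} \text{ for all } (v,i)\text{ with } T_v(i)\ge 1\bigr\}$. By Hoeffding's inequality and a union bound over $(v,i)$ and $T_v(i)\in[t]$, $\PP{\cN_t^c}\le 2m/t^2$, so these rounds contribute $O(m\Delta_{\max})$ total, absorbed into the additive logarithmic term. Next, on $\cN_t$, I would verify by induction on $i$ that the capped UCBs satisfy $\bar{p}'_v(i)\ge p_v(i)$: the base case is trivial, and at step $i$ the decreasing property $p_v(i-1)\ge p_v(i)$ together with $\bar{p}_v(i)\ge p_v(i)$ gives $\min\{\bar{p}'_v(i-1),\bar{p}_v(i)\}\ge p_v(i)$. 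Since also $\bar{p}'_v(i)\le \bar{p}_v(i)$, the capping does not inflate the width, and
\[
0\;\le\;\bar{p}'_v(i)-p_v(i)\;\le\;2\sqrt{3\log t/(2T_v(i))}.
\]

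Combining the $(\alpha,\beta)$-approximation guarantee of $\oracle$ with the monotonicity of the influence spread in $p$, I would get $\mathbb{E}[r(t)\mid\cN_t]\le r(S_t,\bar p')-r(S_t,p)$. Applying the DC TPM bounded smoothness condition (with some constant $B=O(\widetilde V)$ from the earlier lemma) bounds this by $B\sum_{v,i}P_{v,i}^{S_t}(\bar{p}'_v(i)-p_v(i))$. The remaining task is to bound $\sum_{t}\sum_{v,i}P_{v,i}^{S_t}\min\{1,2\sqrt{3\log t/(2T_v(i))}\}$, which is exactly the object controlled by the TPM regret decomposition: one separates rounds by the per-round gap $\Delta_{S_t}$, uses $\Delta_{S_t}\ge \Delta_{\min}^{v,i}$ whenever $P_{v,i}^{S_t}>0$ to obtain a counting bound on $T_v(i)$, and integrates the resulting harmonic sum in $\log T$. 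This produces the $576\widetilde V^2 m\log T/\Delta_{\min}^{v,i}$ per-arm term, the logarithmic tail $\log(4m\widetilde V/\Delta_{\min}^{v,i})$, and the simplification to $O(m^2n^2\log T/\Delta_{\min})$.

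The main obstacle is establishing the TPM bounded smoothness constant for DC, since unlike IC, a perturbation in $p_v(i)$ only matters in sample paths where $v$ has already received exactly $i-1$ prior failed attempts; this coupling has to be handled using the order-independence property to reduce the DC diffusion to a sequence of conditionally independent Bernoulli trials on which the IC-style triggering analysis can be applied. Once that lemma is available, the only DC-specific ingredient in the present theorem is the capping bookkeeping sketched above; the rest is a direct transcription of the IC argument with the DC-specific $B$ inserted.
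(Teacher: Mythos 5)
Your proposal follows essentially the same route as the paper: the concentration event with probability $2m/t^2$, the inductive argument that capping preserves the upper-confidence property (the paper's Lemma 2), monotonicity plus the oracle guarantee, the DC TPM condition (the paper's Theorem 3), and then the Wang--Chen triggering-probability grouping with $\Delta_{S_t}\ge\Delta_{\min}^{v,i}$ to extract the $576\widetilde{V}^2 m\ln T/\Delta_{\min}^{v,i}$ term. The only cosmetic difference is that the paper runs the counting argument on per-group counters $N_{v,i,t}^{j}$ (linked to $T_v(i)$ via its second failure event) rather than on $T_v(i)$ directly, but this is exactly the ``TPM regret decomposition'' you invoke, so the approaches coincide.
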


Due to the space limit, the detailed proof of Theorem \ref{thm:UCB:free} and \ref{thm:UCB:dependent} are provided in Appendix. 
By carefully analyzing the information diffusion process under DC, we prove our DC-based TPM condition, similar to that under IC \citep{WeiChen2017,zhengwen2017nips}. Such TPM condition bounds the difference between influence spread under different activation probabilities, which is crucial to acquire the above theoretical guarantees.

\subsection{The TPM Condition under the DC model}

To bound the influence spread difference $r(S_t,\bar{p}_t) - r(S_t,p)$ under two activation probabilities, we need the following key theorem of the triggering probability modulated (TPM) condition. Such a condition is crucial in deriving the final regret bound, similarly to that under the IC model \citep{WeiChen2017,zhengwen2017nips}.
Denote $V_{S,v}$ as the set of vertices who are on any path from $S$ to $v$ for any seed set $S$.
\begin{theorem}
\label{lem:TPM}
For any two activation probability vectors $p$ and $\bar{p}$ satisfying $p_v(i) \le \bar{p}_v(i)$ for any $v \in V, i \in [|N(v)|]$, the difference between the influence spread of any seed set $S$ under these two activation probability vectors is at most
\begin{align}
	 & \  r(S,\bar{p}) - r(S,p)\nonumber \\
	\le\  &\ \mathbb{E}\bigg[\sum_{v \in V\setminus S} \ \sum_{u \in V_{S,v}} \ \sum_{i \in [|N(u)|]} \ \bOne{O_u(i)} \cdot[\bar{p}_u(i) - p_u(i)]\bigg] \label{thm:keylm-lin} \\
	=\  & \  \sum_{v \in V\setminus S} \ \sum_{u \in V_{S,v}} \ \sum_{i \in [|N(u)|]} \ P_{u,i}^S \cdot[\bar{p}_u(i) - p_u(i)] \label{thm:keylm}\,,
\end{align}
where $O_u(i)$ denotes the event that the $i$-th attempt to activate $u$ under $p$ can be observed.
\end{theorem}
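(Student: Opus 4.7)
The plan is to establish the bound via a coin coupling between the two diffusion processes, combined with a ``critical attempt'' decomposition of the event that a node is activated under $\bar p$ but not under $p$. I would first introduce i.i.d.\ uniform $[0,1]$ random variables $U_u(i)$ for each $u\in V$ and each attempt index $i\in[|N(u)|]$, fix a canonical ordering of attempts on each node, and declare the $i$-th attempt on $u$ under any probability vector $q$ to succeed iff $U_u(i)\le q_u(i)$. By order-independence of DC, the distribution of the activated set under this coupled process matches that of the original $q$-diffusion. A short induction on diffusion time then yields the monotone dominance $A(p)\subseteq A(\bar p)$ under the coupling, so
\[
 r(S,\bar p)-r(S,p)\;=\;\sum_{v\in V\setminus S}\mathbb{P}\bigl(v\in A(\bar p)\setminus A(p)\bigr).
\]

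The heart of the proof is the claim that on $\{v\in A(\bar p)\setminus A(p)\}$ there exist $u\in V_{S,v}$ and $i\in[|N(u)|]$ such that $O_u(i)$ holds in the $p$-diffusion and $U_u(i)\in(p_u(i),\bar p_u(i)]$. I would prove this by induction on the time $\tau$ at which $v$ becomes activated under $\bar p$. Let $u^\ast$ be the in-neighbor whose attempt activates $v$ at time $\tau$. If $u^\ast\notin A(p)$, apply the induction hypothesis to $u^\ast\in A(\bar p)\setminus A(p)$ and observe $V_{S,u^\ast}\subseteq V_{S,v}$. If $u^\ast\in A(p)$, then since $v$ is never activated under $p$, all $k_p:=|N(v)\cap A(p)|$ in-neighbors of $v$ activated under $p$ attempt on $v$ and fail, giving $U_v(i)>p_v(i)$ for $i\le k_p$; if $v$ is activated under $\bar p$ at its $i_{\bar p}$-th attempt then $U_v(i_{\bar p})\le\bar p_v(i_{\bar p})$. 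When $i_{\bar p}\le k_p$, the pair $(v,i_{\bar p})$ is critical; when $i_{\bar p}>k_p$, strictly more in-neighbors of $v$ are activated under $\bar p$ than under $p$, so some in-neighbor lies in $A(\bar p)\setminus A(p)$ and the induction descends to it.

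To close, observe that $O_u(i)$ depends only on coins other than $U_u(i)$ --- it asks that at least $i-1$ prior attempts on $u$ fail under $p$ and that at least $i$ in-neighbors of $u$ are $p$-activated --- so $O_u(i)$ is independent of $U_u(i)$. Combining this with the critical-attempt claim via a union bound gives
\[
 \mathbb{P}\bigl(v\in A(\bar p)\setminus A(p)\bigr)\;\le\;\sum_{u\in V_{S,v}}\sum_{i\in[|N(u)|]}P^S_{u,i}\bigl(\bar p_u(i)-p_u(i)\bigr),
\]
and summing over $v\in V\setminus S$ delivers equation~(\ref{thm:keylm}); equation~(\ref{thm:keylm-lin}) is simply its pre-expectation form, since $P^S_{u,i}=\mathbb{E}[\mathds{1}\{O_u(i)\}]$.

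The hardest step will be the sub-case $i_{\bar p}>k_p$ of the inductive argument: the successful attempt under $\bar p$ has index larger than the total number of attempts actually made under $p$, so no critical coin on $v$ itself is forced, and one must descend to an in-neighbor in $A(\bar p)\setminus A(p)$. Making this airtight requires a careful choice of canonical attempt ordering (which in general differs between the two diffusions) and verifying that the independence of $O_u(i)$ and $U_u(i)$ is preserved under that ordering.
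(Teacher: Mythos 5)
Your proposal is correct and follows essentially the same route as the paper's proof: a monotone coupling of the two diffusions, identification of a critical node--attempt pair $(u,i)$ on a path from $S$ to $v$ at which the processes first diverge (with the preceding attempts on $u$ failing, hence the $i$-th attempt observed under $p$), and a union bound weighted by the observation probabilities $P^S_{u,i}$. Your version merely makes explicit what the paper leaves implicit --- the uniform-coin coupling, the induction on the $\bar p$-activation time in place of the paper's ``first divergence step $\tau$'' argument, and the independence of $O_u(i)$ from $U_u(i)$ --- and the concern you raise about attempt ordering is resolved by the model assumption that $p_v(u,S')$ depends only on $|S'|$ together with order-independence.
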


\begin{proof}[Proof of Theorem \ref{lem:TPM}]
	Recall that $r(S,p,v)$ is the probability that $v$ is finally influenced in the diffusion process starting from $S$ under $p$. We can decompose the influence spread difference under two activation probability vectors as
	\begin{align*}
		r(S,\bar{p}) - r(S,p) =& \sum_{v \in V\setminus S} r(S,\bar{p},v) - r(S,p,v)  \\
		=& \mathbb{E}\left[ \sum_{v \in V\setminus S} \bOne{v \text{ is influenced under }\bar{p}} \right.\\
  &\left. - \bOne{v \text{ is influenced under }p}\right] \,.
	\end{align*}

According to the monotonicity of the influence spread, \\
$$\bOne{v \text{ is influenced under }\bar{p}} - \bOne{v \text{ is influenced under }p} \in \set{0,1}.$$ 
When $\mathds{1} \{v$ $\text{ is influenced under }\bar{p}\}$ $- \bOne{v \text{ is influenced under }p} = 1$, which means 
\begin{align*}
    &\bOne{v \text{ is influenced under }\bar{p}} = 1\,,\\
     &\bOne{v \text{ is influenced under }p} = 0\,,
\end{align*}
$v$ is influenced under $\bar{p}$ but not influenced under $p$. 

Since the influence status of $v$ under two activation probabilities are different, there must exist a step $\tau>0$ in the diffusion process such that starting from $\tau$, the influence statuses of the nodes related to $v$ are different. To be specific, there must exist some time step $\tau$ and node $u \in V_{S,v}$ such that the active in-neighbors of $u$ under $p$ and $\bar{p}$ are the same until the end of $\tau$, but the influence status of $u$ under $p$ and $\bar{p}$ are not the same at $\tau+1$. 

Let $\ell_{u,\tau-1}, \ell_{u,\tau}$ be the number of active in-neighbors at step $\tau-1$ and $\tau$, respectively. Then at step $\tau$, there are totally $(\ell_{u,\tau}-\ell_{u,\tau-1}-1)$ attempts to activate $u$ under both $p$ and $\bar{p}$. We denote them as the $(\ell_{u,\tau-1}+1)$-th, $(\ell_{u,\tau}+2)$-th, $\ldots$, $\ell_{u,\tau}$-th attempt. Also for any $\ell_{u,\tau-1}+1 \le j \le \ell_{u,\tau}$, let $Y_u(j)$ and $\bar{Y}_u(j)$ be the status of the $j$-th attempt under $p$ and $\bar{p}$, respectively. It holds that $\EE{Y_u(j)} = p_u(j), \EE{\bar{Y}_u(j)} = \bar{p}_u(j)$. 

Then based on the above analysis, there must exists an $i$-th attempt, where $\ell_{u,\tau-1}+1 \le i \le \ell_{u,\tau}$ and all attempts to activate $u$ before $i$ fail under both $p$ and $\bar{p}$ and the $i$-th attempt succeeds under $\bar{p}$ but fails under $p$. That is,
	\begin{align*}
		\forall j <i, X_u(j) = 0, \bar{X}_u(j) = 0 \,,\\
		X_u(i) = 0, \bar{X}_u(i) = 1 \,.
	\end{align*}
This event happens with probability $\prod_{j < i} (1-\bar{p}_u(j)) \cdot [\bar{p}_u(i) - p_u(i)] $. By the union bound, we conclude that
	\begin{align}
		&r(S,\bar{p}) - r(S,p) \\
  =& \mathbb{E}\left[ \sum_{v \in V\setminus S} \bOne{v \text{ is influenced under }\bar{p}} \right.\\
  &\left. - \bOne{v \text{ is influenced under }p} \right] \notag \\
\le& \EE{ \sum_{v \in V\setminus S} \sum_{u \in V_{S,v}} \sum_{\ell_{u,\tau-1}+1 \le i \le \ell_{u,\tau}  }\prod_{j < i} (1-\bar{p}_u(j)) \cdot [\bar{p}_u(i) - p_u(i)] } \notag \\
\le& \EE{ \sum_{v \in V\setminus S} \sum_{u \in V_{S,v}} \sum_{i \in [|N(u)|]} \bOne{A_u(i)}\prod_{j < i} (1-\bar{p}_u(j)) \cdot [\bar{p}_u(i) - p_u(i)] } \label{eq:keylemma:defA} \,,
	\end{align} 
where \eqref{eq:keylemma:defA} is derived by the definition of event $A_u(i)$, which denotes that there are more than $i$ active in-neighbors of node $u$ under $p$. 

Also based on the feedback scheme of both the DC-UCB algorithm and the DC-LinUCB algorithm, all status of attempts to activate $u$ before $u$ is finally influenced can be observed, which means that 
\begin{align*}
	\EE{\bOne{A_u(i)}\prod_{j < i} (1-\bar{p}_u(j))} &\le \EE{ \bOne{A_u(i)}\prod_{j < i} (1-{p}_u(j))} \\
 &= \EE{\bOne{O_u(i)}} = P_{u,i}^S \,.
\end{align*}
Thus we conclude that 
\begin{align*}
	&r(S,\bar{p}) - r(S,p)\\ 
 \le& \EE{ \sum_{v \in V\setminus S} \sum_{u \in V_{S,v}} \sum_{i \in [|N(u)|]} \bOne{O_u(i)} \cdot [\bar{p}_u(i) - p_u(i)] } \\
=& \sum_{v \in V\setminus S} \sum_{u \in V_{S,v}} \sum_{i \in [|N(u)|]} P_{u,i}^S \cdot[\bar{p}_u(i) - p_u(i)] \,.\qedhere
\end{align*} 
\end{proof}

\section{Experiments}\label{sec:exp}

In this section, we compare our DC-UCB algorithm with related baselines in both synthetic and real-world networks\footnote{The code is available at \url{https://github.com/fangkongx/OIM-DC}.}. Since this work is the first to study the OIM problem under the DC model with rigorous theoretical guarantees, few directly comparable baselines exist. We exhaust those baselines and also adopt baseline methods that give insight into the performance of our algorithms via indirect comparisons. The following are descriptions. 


\indent \textit{UCB.~} The OIM problem under DC can be regarded as a multi-armed bandit problem if we treat each seed set as an arm and the influence spread as its expected reward. 
Thus, the classical UCB algorithm \citep{auer2002finite} can be applied to solve this problem. This algorithm maintains a UCB index for each arm and selects the arm enjoying the highest UCB index in each round. It is acceptable when the graph is small but is not feasible when the graph is large as the number of arms grows exponentially with the number of nodes.

\indent \textit{CMAB-UCB-average and CMAB-UCB-random.~} The OIM problem under DC is a CMAB problem, where we can treat each node as a base arm and a seed set as a super arm. The influence spread of a seed set is then the expected reward of the super arm. Note that this reward cannot be written in a linear function form in terms of the utilities of the included base arms (nodes). Since most CMAB works study a linear function reward with semi-bandit feedback, we consider two variant updates here. The first is to divide the received reward by the size of the chosen super arm and assign the quotient to each of base arms in this super arm, denoted as \textit{average}. The second is to randomly select a base arm from the super arm and assign the received reward to this base arm only (other base arms receive a reward $0$), denoted as \textit{random}. These two variants both use linear function approximation and adopt different assignments of the received reward.

\begin{figure*}[tbh!] 
\includegraphics[width=0.245\linewidth]{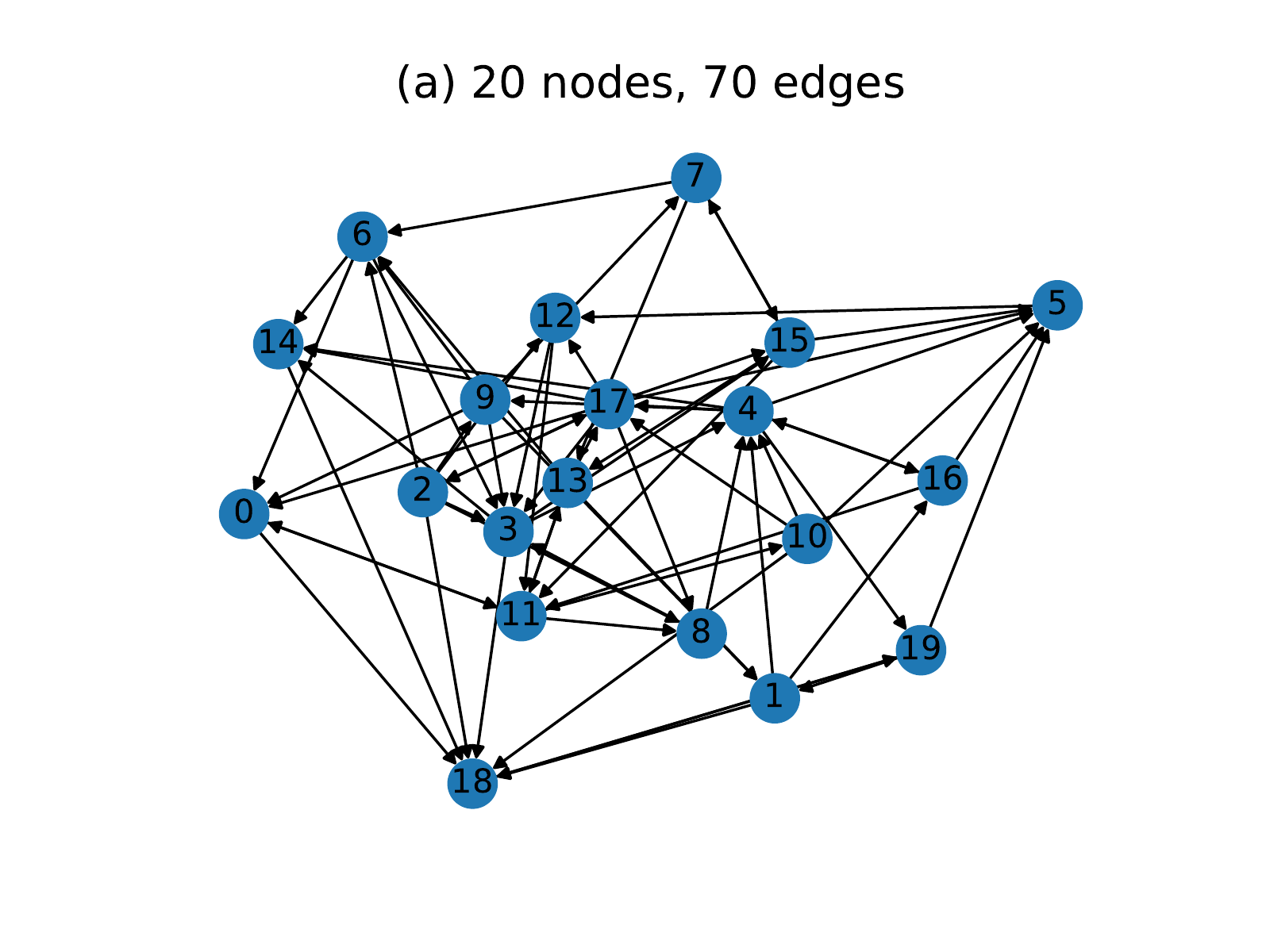} 
\includegraphics[width=0.245\linewidth]{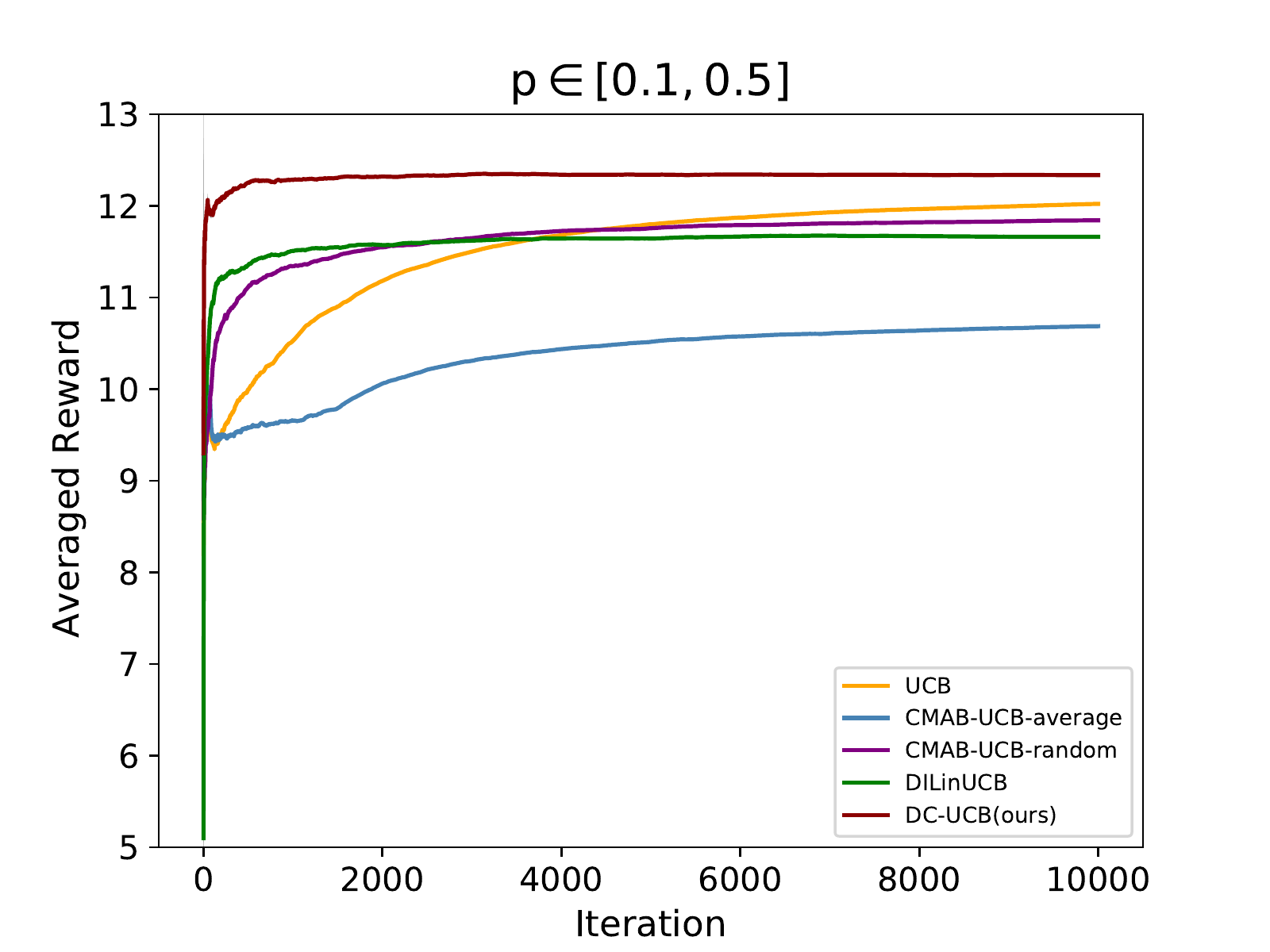}
\includegraphics[width=0.245\linewidth]{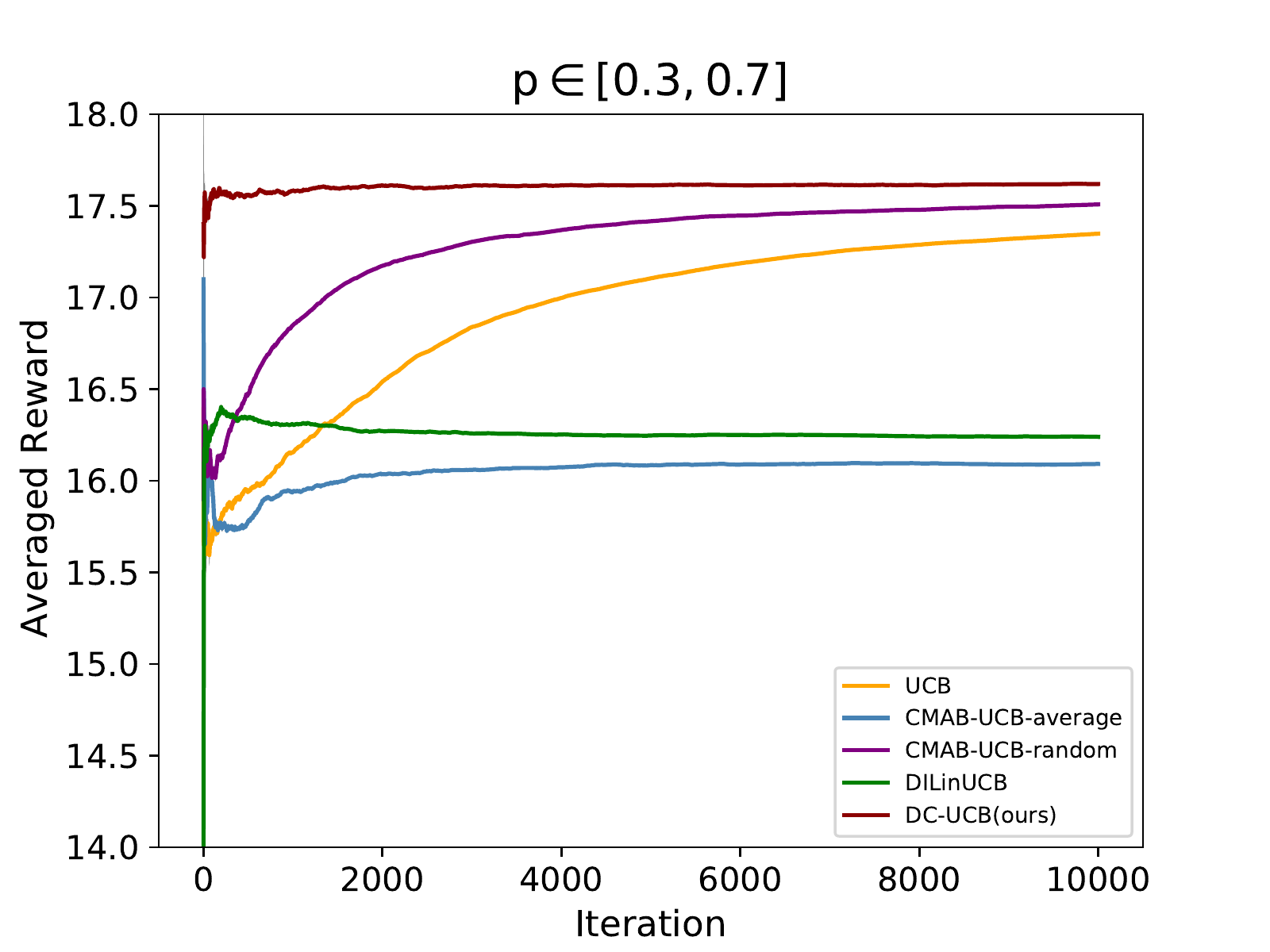} 
\includegraphics[width=0.245\linewidth]{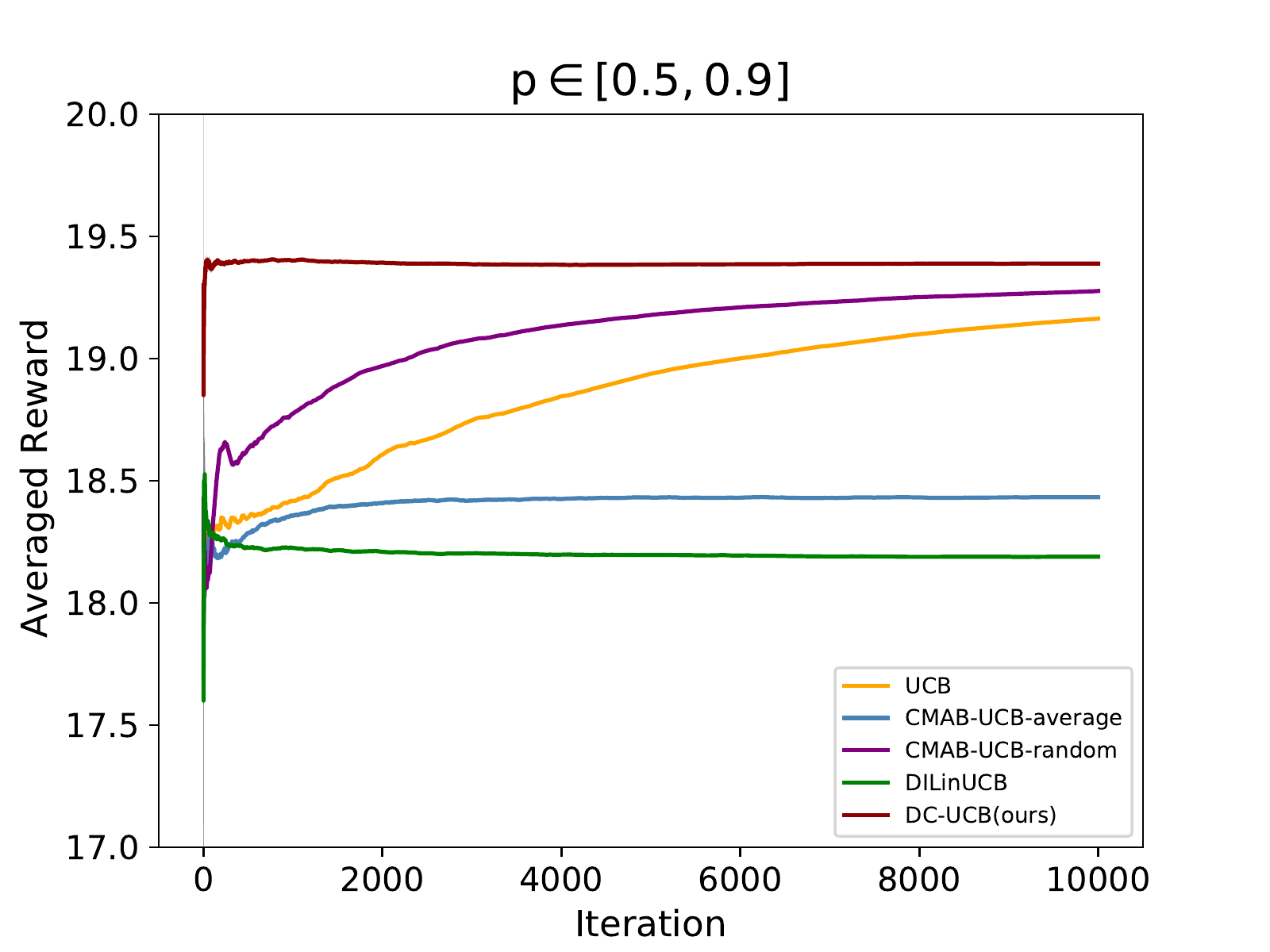}
  \caption{Comparison of DC-UCB with UCB, CMAB-UCB-average, CMAB-UCB-random and DILinUCB on synthetic networks with different activation probabilities. The performance is evaluated in averaged reward (the cumulative reward divides by the number of iterations). The standard error is represented in gray shades and all results are averaged over $10$ independent runs. }
  \label{fig:synthetic}
\end{figure*}

\indent \textit{DILinUCB.~} This algorithm \citep{Model-Independent2017} can also be applied to our DC model, though its heuristic objective has no guaranteed approximation ratio. In this setting, the feedback for each pair of the seed node and a node is $1$ if there is an influence path from the seed node to it under our (partial) edge-level feedback. We adopt the greedy oracle designed in \cite{Model-Independent2017} as an offline oracle. For the tabular case, we simply use the one-hot representations as features.

\indent \textit{CUCB(IC), IMLinUCB(IC), and IMFB(IC).~} Recall that if the probabilities of each node are the same, the diffusion process under DC and IC are equivalent. In this special case, we can compare with the state-of-the-art IC-based algorithms CUCB \citep{WeiChen2017}, IMLinUCB \citep{zhengwen2017nips} and IMFB \citep{IMFB2019}. 
For the offline oracle required by these algorithms, we adopt the greedy algorithm \citep{Kempe2003} as is in our DC-UCB.

\subsection{Synthetic Network}
In this experiment, we compare the performance of our DC-UCB algorithm with other baselines on the synthetic network with different activation probabilities.

For the synthetic network, we randomly generate an Erd\"os-R\'enyi graph with $p=0.2$ for $n=20$ nodes. The resulting network contains $m=70$ edges, as shown in Figure \ref{fig:synthetic}(a).
For this network, we set up three groups of activation probabilities falling into different intervals.
Specifically, each activation probability is uniformly sampled from $[0.1,0.5]$,$[0.3,0.7]$,$[0.5,0.9]$ in three groups, respectively. 
Different values of activation probabilities correspond to different ability levels of the social network to spread information. 
Thus the performance of algorithms in these settings could represent their learning ability in different environments. 
To maintain the decreasing property of the DC model, the probability sequence of each node is then sorted in decreasing order. The seed set cardinality is set to $K=2$ under all three settings.

We compare the performance of our DC-UCB with UCB, CMAB-UCB-average, CMAB-UCB-random and DILinUCB, when solving the OIM problem under DC. All algorithms run for $T=10k$ rounds. 
The averaged rewards of those algorithms defined as the cumulative reward divides by the number of iterations are shown in Figure \ref{fig:synthetic}(b)(c)(d). All results are averaged over $10$ independent runs. 

Benefitting from the careful consideration of the decreasing property in DC, our DC-UCB achieves the best performance over all baselines in three settings.
The baselines UCB and CMAB-UCB-random also have comparable performance finally, but they converge much slower. This is because these two algorithms need to evaluate the reward of each seed set/node, which costs a lot of rounds to get accurate estimations. 
Especially the baseline UCB is not realistic to apply in larger networks due to the exponential number of seed sets to be evaluated. 
Other two baselines are at least $5.77\%$ ($7.83\%$, $5.21\%$) lower than ours in group $[0.1,0.5]$ ($[0.3,0.7]$, $[0.5,0.9]$, respectively).

\begin{table*}[]
\centering
\renewcommand{\arraystretch}{1.08}
\begin{tabular}{m{4cm}<{\centering} | m{1.5cm}<{\centering}  m{1.5cm}<{\centering}  m{1.5cm}<{\centering}  m{1.5cm}<{\centering}  m{1.5cm}<{\centering}  m{1.5cm}<{\centering} }
\toprule
                 & \multicolumn{1}{c}{\begin{tabular}[c]{@{}c@{}}NetHEPT\\ {[}0.1,0.5{]}\end{tabular}} & \multicolumn{1}{c}{\begin{tabular}[c]{@{}c@{}}NetHEPT\\ {[}0.3,0.7{]}\end{tabular}} & \multicolumn{1}{c}{\begin{tabular}[c]{@{}c@{}}NetHEPT\\ {[}0.5,0.9{]}\end{tabular}} & \multicolumn{1}{c}{\begin{tabular}[c]{@{}c@{}}Flickr\\ {[}0.1,0.5{]}\end{tabular}} & \multicolumn{1}{c}{\begin{tabular}[c]{@{}c@{}}Flickr\\ {[}0.3,0.7{]}\end{tabular}} & \multicolumn{1}{c}{\begin{tabular}[c]{@{}c@{}}Flickr\\ {[}0.5,0.9{]}\end{tabular}} \\
                 \hline
DC-UCB(ours)     & \textbf{209.20} &  \textbf{228.44}  & \textbf{258.01} & \textbf{278.16}  & \textbf{295.48}  & \textbf{310.44}  \\
DILinUCB         &  179.17 & 208.27  & 229.78  & 263.45   & 283.36  & 298.60  \\
CMAB-UCB-average &  119.73 & 146.01 & 160.35  &  200.19   &  220.18 & 230.93 \\
CMAB-UCB-random  &  174.47 & 209.62 & 227.17  &  252.03   &  274.75 & 284.76 \\
\bottomrule                                                                             
\end{tabular}
\caption{Comparison of DC-UCB with DILinUCB, CMAB-UCB-average and CMAB-UCB-random on NetHEPT and Flickr datasets with different activation probabilities. The performance is evaluated in averaged reward. All algorithms run for $10k$ rounds and all results are averaged over $10$ independent runs. The UCB algorithm is not included due to the exponential number of actions.}
\label{table:real:simulation} 
\end{table*}

\subsection{Real Networks}
\label{sec:simulation}
We then compare the performance of our DC-UCB with other related baselines on two real networks, NetHEPT\footnote{https://snap.stanford.edu/data/cit-HepTh.html} and Flickr\footnote{https://snap.stanford.edu/data/web-flickr.html}. 

The original NetHEPT (Flickr) dataset contains $27,770$ ($105,938$) nodes and $352,807$ ($2,316,948$, respectively) edges. 
Since it will be easier for the learning algorithm to identify the optimal seed set in the original sparse graph, here we extract a relatively dense one to make the learning task more challenging.
We first select nodes whose degree (\textit{in-degree} plus \textit{out-degree}) is in $[20,120]$, then randomly select $10$ nodes among them and keep all edges that have a start node or end node in these $10$ nodes as an intermediate graph. 
The largest connected subgraph of it forms our final network.
The resulting subgraph of NetHEPT is composed of $n=323$ nodes and $m=3,478$ edges, the subgraph of Flickr is composed of $n=319$ nodes and $m=5,904$ edges. 

We again set up three different groups with activation probabilities falling into intervals $[0.1,0.5]$,$[0.3,0.7]$,$[0.5,0.9]$. The probability sequence for each node is then sorted in decreasing order to maintain the decreasing property of DC. The seed set cardinality is set to $K=5$ in both subgraphs of NetHEPT and Flickr under three different settings. 

We compare the performance of our DC-UCB with DILinUCB, CMAB-UCB-average and CMAB-UCB-random. 
Since UCB needs to enumerate the exponential number of seed sets, here we do not include this baseline. 
All algorithms run for $T=10k$ rounds. The averaged rewards of those algorithms are shown in Table \ref{table:real:simulation} and all results are averaged over $10$ independent runs.

Our DC-UCB algorithm again shows consistent advantages over other baselines in all six settings, which demonstrates its strong learning ability in different environments. The CMAB-UCB-random performs the second-best in the setting NetHEPT$[0.3, 0.7]$, but is $8.24\%$ lower than ours. In the other settings, DILinUCB performs the second-best but is at least $3.81\%$ lower than DC-UCB. The baseline CMAB-UCB-average performs worse and more than $25.48\%$ lower than ours in all six settings.

\begin{table*}  
\centering
\renewcommand{\arraystretch}{1.1}
\begin{tabularx}{17cm}{l|cccccc}   
\toprule                      
    & NetHEPT($0.2$)  & NetHEPT($0.5$) & NetHEPT($0.8$)  & Flickr($0.2$)  & Flickr($0.5$) & Flickr($0.8$) \\
\hline 
DC-UCB(ours) & \textbf{137.43}  & \textbf{230.96}  & \textbf{256.99}  & \textbf{234.72}  & \textbf{292.02}  & \textbf{309.32}  \\
CUCB(IC) & 132.01  &  227.40  & 252.76  & 223.31 & 283.82 & 304.20 \\
IMLinUCB(IC) & 130.64   & 224.25  & 251.87  & 231.36 & 284.55 & 305.18 \\
IMFB (IC) & 89.58   & 211.47  & 235.60  & 178.96 & 238.24 & 261.57 \\
DILinUCB &  131.75  & 204.78  & 232.31  & 220.98 & 277.02 & 288.89 \\
CMAB-UCB-average &  68.45  & 131.51  & 160.16  & 157.48 & 219.34 & 251.20 \\
CMAB-UCB-random &  95.71   & 192.71  & 228.35  & 207.54 & 268.89 & 289.17 \\
\bottomrule 
\end{tabularx}  
\caption{Comparisons of DC-UCB with CUCB(IC), IMLinUCB(IC), IMFB(IC), DILinUCB, CMAB-UCB-average and CMAB-UCB-random under homogeneous activation probabilities. Conducted on the NetHEPT and Flickr datasets with three values of $p$ tested on each dataset. The performance is evaluated in averaged reward and all results are averaged over $10$ independent runs.}  
\label{table:sameP}
\end{table*} 

\subsection{Homogeneous Activation Probabilities} 
\label{sec:exp:homo}

Recall that when $p_v(i) \equiv p_v'$ for any $v\in V$ and $i \in [|N(v)|]$, the influence propagations under DC and IC are equivalent. Thus in this case, we can compare our DC-UCB with the CUCB(IC), IMLinUCB(IC) and IMFB(IC) directly. 

In this experiment, we adopt the same subgraphs of real networks as Section \ref{sec:simulation} but with different activation probabilities. 
We consider the case that all activation probabilities are the same, where the information diffusions are equivalent under IC and DC. Three choices of $p \equiv 0.2,0.5,0.8$ are tested on each network and the seed set cardinality is set to $K=5$ for all choices. 
Those three values also reflect different abilities of the social network to spread information and thus can well demonstrate the learning abilities of algorithms.

We compare our DC-UCB with CUCB(IC), IMLinUCB(IC), IMFB(IC), DILinUCB, CMAB-UCB-average and CMAB-UCB-random.  
The dimension is set to $d=5$ in IMFB(IC). As for IMLinUCB(IC), since its tabular case is equivalent to CUCB(IC) and the huge number of edges could result in high computational complexity, we randomly generate a $5$-dimensional feature vector for each edge as input to improve its learning efficiency. 
All algorithms run for $T=10k$ rounds. The averaged rewards of those algorithms are shown in Table \ref{table:sameP}, which are averaged over $10$ independent runs. Again UCB is not tested due to the exponential number of seed sets.

Our DC-UCB algorithm performs better than all baselines in six environments. These results indicate that in the setting where IC and DC are equivalent, our DC-based algorithm is more efficient than that of IC-based algorithms. 
The reason is that under DC, the leading probabilities of each node would receive more updates, since those probabilities would be always observed once the node has active in-neighbors. 
And as the leading probabilities are much more important than the tail probabilities, the influence spread would be estimated more accurately. 
While under IC, each activation probability term is bound to a specific edge, thus the updates are performed in a uniform manner over all incoming edges, making the estimated influence spread less accurate when observing the same propagations with DC.

\section{Conclusion}

This work is the first to study the OIM problem under the DC model, which generalizes IC by removing its edge independence assumption. The DC model is general enough to consider the decay property and the market saturation phenomenon of real information diffusion. 
We propose the DC-UCB algorithm to solve this problem with rigorous regret bound guarantees. Compared with the regret order of that under IC, our regret bound is at least as good. 
The algorithm is tested extensively on both synthetic datasets and real datasets of NetHEPT and Flickr against several baselines. Our algorithm consistently outperforms the baselines by a significant margin which validates its practical effectiveness.

An interesting future direction is to consider the influence factor of the start node $u$ when $u$ tries to activate $v$. 
This remains an instance of DC but is more general than the assumption in our work. The generalization is very important as the activation probabilities between users are likely to depend on the influence ability of the start node, apart from the end node, in real diffusion problems.
It will be more challenging to derive a general formulation incorporating such a factor with the decreasing and order-independence requirement of the DC model.

\bibliographystyle{named}
\bibliography{ref}

\begin{thebibliography}{}

\bibitem[\protect\citeauthoryear{Auer \bgroup \em et al.\egroup
  }{2002}]{auer2002finite}
Peter Auer, Nicolo Cesa-Bianchi, and Paul Fischer.
\newblock Finite-time analysis of the multiarmed bandit problem.
\newblock {\em Machine learning}, 47(2-3):235--256, 2002.

\bibitem[\protect\citeauthoryear{Cautis \bgroup \em et al.\egroup
  }{2019}]{cautis2019adaptive}
Bogdan Cautis, Silviu Maniu, and Nikolaos Tziortziotis.
\newblock Adaptive influence maximization.
\newblock In {\em Proceedings of the 25th ACM SIGKDD International Conference
  on Knowledge Discovery \& Data Mining}, pages 3185--3186, 2019.

\bibitem[\protect\citeauthoryear{Chen \bgroup \em et al.\egroup
  }{2010}]{chen2010scalableLT}
Wei Chen, Yifei Yuan, and Li~Zhang.
\newblock Scalable influence maximization in social networks under the linear
  threshold model.
\newblock In {\em Proceedings of the 10th International Conference on Data
  Mining}, pages 88--97, 2010.

\bibitem[\protect\citeauthoryear{Chen \bgroup \em et al.\egroup
  }{2013a}]{chen2013information}
Wei Chen, Carlos Castillo, and Laks V.~S. Lakshmanan.
\newblock {\em Information and influence propagation in social networks}.
\newblock Morgan \& Claypool Publishers, 2013.

\bibitem[\protect\citeauthoryear{Chen \bgroup \em et al.\egroup
  }{2013b}]{WeiChen2013}
Wei Chen, Yajun Wang, and Yang Yuan.
\newblock Combinatorial multi-armed bandit: General framework, results and
  applications.
\newblock In {\em Proceedings of the 30th International Conference on Machine
  Learning}, pages 151--159, 2013.

\bibitem[\protect\citeauthoryear{Chen \bgroup \em et al.\egroup
  }{2016}]{WeiChen2016}
Wei Chen, Yajun Wang, Yang Yuan, and Qinshi Wang.
\newblock Combinatorial multi-armed bandit and its extension to
  probabilistically triggered arms.
\newblock {\em The Journal of Machine Learning Research}, 17(1):1746--1778,
  2016.

\bibitem[\protect\citeauthoryear{Goyal \bgroup \em et al.\egroup
  }{2010}]{goyal2010learning}
Amit Goyal, Francesco Bonchi, and Laks V.~S. Lakshmanan.
\newblock Learning influence probabilities in social networks.
\newblock In {\em Proceedings of the 3rd ACM international conference on Web
  search and data mining}, pages 241--250, 2010.

\bibitem[\protect\citeauthoryear{Goyal \bgroup \em et al.\egroup
  }{2011}]{goyal2011simpathLT}
Amit Goyal, Wei Lu, and Laks V.~S. Lakshmanan.
\newblock Simpath: An efficient algorithm for influence maximization under the
  linear threshold model.
\newblock In {\em Proceedings of the 11th International Conference on Data
  Mining}, pages 211--220, 2011.

\bibitem[\protect\citeauthoryear{Gruhl \bgroup \em et al.\egroup
  }{2004}]{gruhl2004information}
Daniel Gruhl, Ramanathan Guha, David Liben-Nowell, and Andrew Tomkins.
\newblock Information diffusion through blogspace.
\newblock In {\em Proceedings of the 13th international conference on World
  Wide Web}, pages 491--501, 2004.

\bibitem[\protect\citeauthoryear{Han \bgroup \em et al.\egroup
  }{2018}]{han2018efficient}
Kai Han, Keke Huang, Xiaokui Xiao, Jing Tang, Aixin Sun, and Xueyan Tang.
\newblock Efficient algorithms for adaptive influence maximization.
\newblock {\em Proceedings of the VLDB Endowment}, 11(9):1029--1040, 2018.

\bibitem[\protect\citeauthoryear{Hoeffding}{1963}]{hoeffding1963probability}
Wassily Hoeffding.
\newblock Probability inequalities for sums of bounded random variables.
\newblock {\em Journal of the American Statistical Association},
  58(301):13--30, 1963.

\bibitem[\protect\citeauthoryear{Hogg and Lerman}{2009}]{hogg2009stochastic}
Tad Hogg and Kristina Lerman.
\newblock Stochastic models of user-contributory web sites.
\newblock {\em arXiv preprint arXiv:0904.0016}, 2009.

\bibitem[\protect\citeauthoryear{Jung \bgroup \em et al.\egroup
  }{2012}]{jung2012irie}
Kyomin Jung, Wooram Heo, and Wei Chen.
\newblock Irie: Scalable and robust influence maximization in social networks.
\newblock In {\em Proceedings of the 12th International Conference on Data
  Mining}, pages 918--923. IEEE, 2012.

\bibitem[\protect\citeauthoryear{Kempe \bgroup \em et al.\egroup
  }{2003}]{Kempe2003}
David Kempe, Jon Kleinberg, and {\'E}va Tardos.
\newblock Maximizing the spread of influence through a social network.
\newblock In {\em Proceedings of the 9th ACM SIGKDD International Conference on
  Knowledge Discovery \& Data Mining}, pages 137--146, 2003.

\bibitem[\protect\citeauthoryear{Kempe \bgroup \em et al.\egroup
  }{2005}]{kempe2005influential}
David Kempe, Jon Kleinberg, and {\'E}va Tardos.
\newblock Influential nodes in a diffusion model for social networks.
\newblock In {\em International Colloquium on Automata, Languages, and
  Programming}, pages 1127--1138. Springer, 2005.

\bibitem[\protect\citeauthoryear{Kossinets and
  Watts}{2006}]{kossinets2006empirical}
Gueorgi Kossinets and Duncan~J Watts.
\newblock Empirical analysis of an evolving social network.
\newblock {\em Science}, 311(5757):88--90, 2006.

\bibitem[\protect\citeauthoryear{Lei \bgroup \em et al.\egroup
  }{2015}]{lei2015online}
Siyu Lei, Silviu Maniu, Luyi Mo, Reynold Cheng, and Pierre Senellart.
\newblock Online influence maximization.
\newblock In {\em Proceedings of the 21th ACM SIGKDD International Conference
  on Knowledge Discovery \& Data Mining}, pages 645--654, 2015.

\bibitem[\protect\citeauthoryear{Leskovec \bgroup \em et al.\egroup
  }{2007}]{leskovec2007dynamics}
Jure Leskovec, Lada~A. Adamic, and Bernardo~A. Huberman.
\newblock The dynamics of viral marketing.
\newblock {\em ACM Transactions on the Web (TWEB)}, 1(1):5--es, 2007.

\bibitem[\protect\citeauthoryear{Li \bgroup \em et al.\egroup
  }{2020}]{li2020online}
Shuai Li, Fang Kong, Kejie Tang, Qizhi Li, and Wei Chen.
\newblock Online influence maximization under linear threshold model.
\newblock In {\em Advances in Neural Information Processing Systems}, 2020.

\bibitem[\protect\citeauthoryear{Myers \bgroup \em et al.\egroup
  }{2012}]{myers2012information}
Seth~A. Myers, Chenguang Zhu, and Jure Leskovec.
\newblock Information diffusion and external influence in networks.
\newblock In {\em Proceedings of the 18th ACM SIGKDD International Conference
  on Knowledge Discovery \& Data Mining}, pages 33--41, 2012.

\bibitem[\protect\citeauthoryear{Netrapalli and
  Sanghavi}{2012}]{netrapalli2012learning}
Praneeth Netrapalli and Sujay Sanghavi.
\newblock Learning the graph of epidemic cascades.
\newblock {\em ACM SIGMETRICS Performance Evaluation Review}, 40(1):211--222,
  2012.

\bibitem[\protect\citeauthoryear{Steeg \bgroup \em et al.\egroup
  }{2011}]{steeg2011stops}
Greg~Ver Steeg, Rumi Ghosh, and Kristina Lerman.
\newblock What stops social epidemics?
\newblock In {\em Proceedings of the 5th International AAAI Conference on
  Weblogs and Social Media}, 2011.

\bibitem[\protect\citeauthoryear{Tang \bgroup \em et al.\egroup
  }{2015}]{IMM2015}
Youze Tang, Yanchen Shi, and Xiaokui Xiao.
\newblock Influence maximization in near-linear time: A martingale approach.
\newblock In {\em Proceedings of the 2015 ACM SIGMOD International Conference
  on Management of Data}, pages 1539--1554, 2015.

\bibitem[\protect\citeauthoryear{Vaswani \bgroup \em et al.\egroup
  }{2015}]{Sharan2015-nodelevel}
Sharan Vaswani, Laks V.~S. Lakshmanan, Mark Schmidt, et~al.
\newblock Influence maximization with bandits.
\newblock {\em arXiv preprint arXiv:1503.00024}, 2015.

\bibitem[\protect\citeauthoryear{Vaswani \bgroup \em et al.\egroup
  }{2017}]{Model-Independent2017}
Sharan Vaswani, Branislav Kveton, Zheng Wen, Mohammad Ghavamzadeh, Laks V.~S.
  Lakshmanan, and Mark Schmidt.
\newblock Model-independent online learning for influence maximization.
\newblock In {\em Proceedings of the 34th International Conference on Machine
  Learning}, pages 3530--3539. JMLR. org, 2017.

\bibitem[\protect\citeauthoryear{Wang and Chen}{2017}]{WeiChen2017}
Qinshi Wang and Wei Chen.
\newblock Improving regret bounds for combinatorial semi-bandits with
  probabilistically triggered arms and its applications.
\newblock In {\em Advances in Neural Information Processing Systems}, pages
  1161--1171, 2017.

\bibitem[\protect\citeauthoryear{Wang \bgroup \em et al.\egroup
  }{2012}]{wang2012scalableIC}
Chi Wang, Wei Chen, and Yajun Wang.
\newblock Scalable influence maximization for independent cascade model in
  large-scale social networks.
\newblock {\em Data Mining and Knowledge Discovery}, 25(3):545--576, 2012.

\bibitem[\protect\citeauthoryear{Watts and Dodds}{2007}]{watts2007influentials}
Duncan~J Watts and Peter~Sheridan Dodds.
\newblock Influentials, networks, and public opinion formation.
\newblock {\em Journal of consumer research}, 34(4):441--458, 2007.

\bibitem[\protect\citeauthoryear{Wen \bgroup \em et al.\egroup
  }{2017}]{zhengwen2017nips}
Zheng Wen, Branislav Kveton, Michal Valko, and Sharan Vaswani.
\newblock Online influence maximization under independent cascade model with
  semi-bandit feedback.
\newblock In {\em Advances in neural information processing systems}, pages
  3022--3032, 2017.

\bibitem[\protect\citeauthoryear{Wortman}{2008}]{wortman2008viral}
Jennifer Wortman.
\newblock Viral marketing and the diffusion of trends on social networks.
\newblock {\em Technical Reports (CIS)}, page 880, 2008.

\bibitem[\protect\citeauthoryear{Wu and Huberman}{2007}]{wu2007novelty}
Fang Wu and Bernardo~A Huberman.
\newblock Novelty and collective attention.
\newblock {\em Proceedings of the National Academy of Sciences},
  104(45):17599--17601, 2007.

\bibitem[\protect\citeauthoryear{Wu \bgroup \em et al.\egroup
  }{2019}]{IMFB2019}
Qingyun Wu, Zhige Li, Huazheng Wang, Wei Chen, and Hongning Wang.
\newblock Factorization bandits for online influence maximization.
\newblock In {\em Proceedings of the 25th ACM SIGKDD International Conference
  on Knowledge Discovery \& Data Mining}, pages 636--646, 2019.

\bibitem[\protect\citeauthoryear{Zhang \bgroup \em et al.\egroup
  }{2016}]{zhang2016influence}
Zhijian Zhang, Hong Wu, Kun Yue, Jin Li, and Weiyi Liu.
\newblock Influence maximization for cascade model with diffusion decay in
  social networks.
\newblock In {\em International Conference of Pioneering Computer Scientists,
  Engineers and Educators}, pages 418--427, 2016.

\bibitem[\protect\citeauthoryear{Zhang \bgroup \em et al.\egroup
  }{2022}]{zhang2022online}
Zhijie Zhang, Wei Chen, Xiaoming Sun, and Jialin Zhang.
\newblock Online influence maximization with node-level feedback using standard
  offline oracles.
\newblock In {\em Proceedings of the AAAI Conference on Artificial
  Intelligence}, volume~36, pages 9153--9161, 2022.

\end{thebibliography}

\appendix

\section{Important Lemmas}

We first introduce some important lemmas which are useful in the main proof. 
The following lemma provides an important property of the influence spread - monotonicity, which plays a significant role in the analysis. 

\begin{lemma}[Monotonicity of the influence spread under the DC model]\label{lem:mono}
The influence spread $r(S,p)$ is monotonically increasing in the activation probabilities $p$. That is, for any $p$ and $p'$ satisfying $p_v(i) \le p'_v(i)$, $\forall v \in V,i \in [|N(v)|]$, $r(S,p) \le r(S,p')$ holds. 
\end{lemma}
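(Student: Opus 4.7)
The plan is to prove monotonicity by a coupling argument. For each node $v \in V$ and each index $i \in [|N(v)|]$, I pre-sample an independent uniform random variable $U_v(i) \sim \mathrm{Uniform}[0,1]$ and use the same randomness in both diffusion processes: under $p$, the $i$-th attempt on $v$ succeeds iff $U_v(i) \le p_v(i)$, and under $p'$, iff $U_v(i) \le p'_v(i)$. Because the activation probabilities in the DC model considered here depend only on $v$ and the attempt index $i$ (not on which particular in-neighbor makes the attempt), this coupling is well-defined, and $\bOne{U_v(i) \le p_v(i)} \le \bOne{U_v(i) \le p'_v(i)}$ holds pointwise in the shared randomness.

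Next, by induction on the diffusion step $\tau$, I will show that the active sets at step $\tau$ under $p$ and $p'$, denoted $S_\tau$ and $S'_\tau$, satisfy $S_\tau \subseteq S'_\tau$. The base case $\tau=0$ is immediate since both processes start from $S$. For the inductive step, take any $v \in S_{\tau+1}$. If $v \in S_\tau$, the inductive hypothesis gives $v \in S'_\tau \subseteq S'_{\tau+1}$. Otherwise $v$ becomes newly active at step $\tau+1$ under $p$, so among the $k_\tau := |N(v) \cap S_\tau|$ attempts that have been made on $v$ by step $\tau+1$, at least one succeeds; letting $i^\ast = \min\{i : U_v(i) \le p_v(i)\}$, this says $i^\ast \le k_\tau$. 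The analogous quantity under $p'$, namely $(i^\ast)' = \min\{i : U_v(i) \le p'_v(i)\}$, satisfies $(i^\ast)' \le i^\ast$, while the inductive hypothesis gives $k_\tau \le k'_\tau := |N(v) \cap S'_\tau|$. Assuming $v \notin S'_\tau$ (else $v \in S'_{\tau+1}$ trivially), by step $\tau+1$ each in-neighbor of $v$ active under $p'$ at time $\tau$ has made its single attempt on $v$, so $k'_\tau$ attempts are available; since $(i^\ast)' \le k'_\tau$, one of them succeeds and $v \in S'_{\tau+1}$.

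Iterating until the diffusion terminates produces $A \subseteq A'$ for the eventually activated sets in every realization of the shared randomness, and taking expectations gives
\[
r(S,p) \;=\; \mathbb{E}\bigl[|A|\bigr] \;\le\; \mathbb{E}\bigl[|A'|\bigr] \;=\; r(S,p').
\]

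The main subtlety, and the reason the order-independence property of the DC model is essential, is that the in-neighbor responsible for the $i$-th attempt on $v$ may differ between the two coupled processes: the sets $S_\tau \setminus S_{\tau-1}$ and the internal order in which their members attempt on $v$ need not coincide under $p$ and $p'$. Because the success probability of the $i$-th attempt depends only on $i$ and $v$, a single variable $U_v(i)$ nevertheless suffices to couple the two $i$-th attempts, and order-independence guarantees that any fixed canonical ordering imposed on the simultaneous attempts has no effect on the law of the activated set, so the attempt-counting argument driving the inductive step is legitimate.
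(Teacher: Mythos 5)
Your proof is correct, and it is substantially more rigorous than what the paper actually writes down. The paper's own proof merely decomposes $r(S,p)=\sum_{v}r(S,p,v)$ into per-node activation probabilities and then asserts that each term is ``straightforwardly'' monotone in $p$; no mechanism is given for comparing the two diffusion processes. Your coupling --- one shared uniform variable $U_v(i)$ per (node, attempt-index) pair, followed by induction on the diffusion step to get $S_\tau\subseteq S'_\tau$ pathwise --- is exactly the missing mechanism, and you correctly identify the one place where the paper's special structure is needed: because $p_v(u,S')=p_v(|S'|+1)$ depends only on the attempt count, the $i$-th attempt under $p$ and the $i$-th attempt under $p'$ can be driven by the same $U_v(i)$ even though they may be made by different in-neighbors at different steps. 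Your inductive step is sound (the key chain $(i^\ast)'\le i^\ast\le k_\tau\le k'_\tau$ is right, and the implicit fact that $(i^\ast)'>k'_{\tau-1}$ when $v\notin S'_\tau$ guarantees the successful attempt is actually executed at step $\tau$). Two small remarks: the decreasing property of $p_v(\cdot)$ is never used --- only the pointwise domination $p_v(i)\le p'_v(i)$ matters --- and in this count-only parametrization the order-independence caveat you raise is automatic rather than essential. Your pathwise conclusion $A\subseteq A'$ is in fact stronger than the lemma as stated, and it is precisely what the paper silently relies on later when it claims $\bOne{v\text{ is influenced under }\bar p}-\bOne{v\text{ is influenced under }p}\in\{0,1\}$ inside the proof of the TPM condition, so your version is the one that actually supports the downstream analysis.
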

\begin{proof}
Define $r(S,p,v)$ as the probability that node $v$ is finally influenced in the diffusion process starting from $S$ under $p$. Then the influence spread can be decomposed as $r(S,p) = \sum_{v \in V}r(S,p,v)$. Consider two diffusion process under $p$ and $p'$, it is straightforward that the probability that a node is influenced under $p'$ is larger than the probability under $p$. Thus $r(S,p) \le r(S,p')$ holds.
\end{proof}

We then arm ourselves with a technical lemma for DC-UCB. 

\begin{lemma}\label{lem:p:p'}
In both DC-UCB and DC-LinUCB, for any node $v \in V$, if $p_v(i)\le\bar{p}_v(i)$ holds for any $i\in [|N(v)|]$, then $p_v(i)\le\bar{p}'_v(i)$ holds for any $i\in [|N(v)|]$. 
\end{lemma}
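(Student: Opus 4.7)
The plan is to prove the lemma by a short induction on $i$, leveraging the fact that $\bar{p}'_v(i)$ is by construction the running minimum of $\bar{p}_v(1),\bar{p}_v(2),\ldots,\bar{p}_v(i)$, together with the decreasing property of the DC model, $p_v(1)\ge p_v(2)\ge\cdots\ge p_v(|N(v)|)$.

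First, for the base case $i=1$, observe that $\bar{p}'_v(1)=\min\{\bar{p}'_v(0),\bar{p}_v(1)\}=\min\{1,\bar{p}_v(1)\}=\bar{p}_v(1)$, and the hypothesis $p_v(1)\le\bar{p}_v(1)$ gives the result immediately. For the inductive step, suppose $p_v(i-1)\le\bar{p}'_v(i-1)$. By the decreasing property of $p$ under the DC model, $p_v(i)\le p_v(i-1)\le\bar{p}'_v(i-1)$. Combined with the hypothesis $p_v(i)\le\bar{p}_v(i)$, we obtain
\[
p_v(i)\ \le\ \min\bigl\{\bar{p}'_v(i-1),\ \bar{p}_v(i)\bigr\}\ =\ \bar{p}'_v(i),
\]
completing the induction.

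The argument is essentially mechanical; there is no substantive obstacle, only the need to invoke the two structural facts (the definition of the capping step on line~\ref{alg:scaleUCB} and the decreasing property of the true activation probabilities under the DC model). The same proof applies verbatim to DC-LinUCB, since the only property of $\bar{p}_v(i)$ used is the pointwise upper bound $p_v(i)\le\bar{p}_v(i)$ assumed in the statement, and the capping rule $\bar{p}'_v(i)=\min\{\bar{p}'_v(i-1),\bar{p}_v(i)\}$ is identical in both algorithms.
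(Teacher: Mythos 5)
Your proof is correct and follows essentially the same route as the paper's: an induction on $i$ whose inductive step combines the DC decreasing property $p_v(i)\le p_v(i-1)$ with the hypothesis $p_v(i)\le\bar p_v(i)$ to pass the bound through the running minimum $\bar p'_v(i)=\min\{\bar p'_v(i-1),\bar p_v(i)\}$. The only cosmetic difference is that you phrase the base case via $\min\{1,\bar p_v(1)\}$ (which equals $\bar p_v(1)$ thanks to the projection onto $[0,1]$), whereas the paper treats $\bar p'_v(1)=\bar p_v(1)$ as definitional.
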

\begin{proof}
Recall that $\bar{p}'_v = (\bar{p}'_v(i))_{i \in [|N(v)|]}$ in both DC-UCB and DC-LinUCB is computed as:
\begin{align*}
  &\bar{p}'_v(1) = \bar{p}_v(1) \,,\\
  &\bar{p}'_v(i) =  \min \set{\bar{p}'_v(i-1), \bar{p}_v(i)  }, \forall i=2,3,\ldots,|N(v)| \,.
\end{align*}

For $i=1$, it is obvious that 
\begin{align}
  p_v(1) \le \bar{p}_v(1) = \bar{p}'_v(1) \label{eq:pv1} \,.
\end{align}

For $i=2$, due to the decreasing property of the activation probability, it holds that $p_v(2) \le p_v(1)$. Combining \eqref{eq:pv1} we get $p_v(2) \le \bar{p}'_v(1)$. Thus, we conclude that $p_v(2) \le \min\set{\bar{p}'_v(1), \bar{p}_v(2)}  = \bar{p}'_v(2)$.

By induction, repeating the above process for $i=3,4,\ldots,|N(v)|$, we get the desired result. 
\end{proof}

\section{Proof of Theorem \ref{thm:UCB:free} and \ref{thm:UCB:dependent}}

Recall that $P_{v,i}^S$ is the probability that node $v$'s $i$-th attempt can be observed in the diffusion process starting from $S$ under activation probability vector $p$. 
According to each entry $(v,i)$, where $v \in V,i \in [|N(v)|]$, we divide the action set $\cS$ into several groups. 

\begin{definition}[Action Groups]\label{def:group}
	For any $v \in V,i \in [|N(v)|]$ and a positive number $j$, define the group of actions
	\begin{align}
		\cS_{v,i}^j = \set{S \in \cS: 2^{-j}\le P_{v,i}^S < 2^{-j+1}}\,.
	\end{align}
	By the definition, $\set{\cS_{v,i}^j}_{j\ge 1}$ forms a partition of $\set{S\in \cS:P_{v,i}^S>0}$. 
\end{definition}

For each group $\cS_{v,i}^j$, we maintain the counter $N_{v,i}^j$ to denote the number of selections of seed sets in this group and use $N_{v,i,t}^j$ to represent the value of $N_{v,i}^j$ in round $t$. Then for any $v \in V,i \in [|N(v)|], j\ge 1$, $N_{v,i,t}^j$ can be computed as follows.
\begin{align}
	N_{v,i,t}^j=
	\begin{cases}
		0,                &t=0\,,  \\
		N_{v,i,t-1}^j+1 , & t>0 \text{ and }S_t \in \cS_{v,i}^j\,, \\
		N_{v,i,t-1}^j,    &\text{ otherwise }.
	\end{cases}
\end{align} 

Before we prove of the main results, we first clarify that $\hat{p}_{t,v}(i),\bar{p}_{t,v}(i), \bar{p}'_{t,v}(i), \rho_{t,v}(i)$ represent the value of $\hat{p}_{v}(i),\bar{p}_{v}(i), \bar{p}'_{v}(i), \rho_{v}(i)$, respectively, in round $t$ for any $v \in V,i \in [|N(v)|]$. And $Y_{t,v}(i)$ represents the observed status $Y_v(i)$ for any observed $i$-th attempt of node $v$ in round $t$. 

Define the event of failure as 
\begin{align}
	\cF_{1,t} = \set{\exists v \in V, i \in [|N(v)|]: \abs{ \hat{p}_{t,v}(i)-p_v(i)}>\rho_{t,v}(i) } \,.
\end{align}
According to \cite[Lemma 3]{WeiChen2017}, it holds that for any round $t \ge 1$: 
\begin{align}
	\PP{\cF_{1,t}} \le \frac{2m}{t^2}\,.  \label{eq:thm1:F1}
\end{align}
Also, given a series $\set{j_{v,i}^{\max}}_{v \in V,i \in [|N(v)|]}$, define the event of failure about relationships between the action selection and the local entry as 
\begin{align}
	\cF_{2,t} = \set{\forall 1\le  j \le j_{v,i}^{\max}: \sqrt{\frac{6 \ln t}{\frac{1}{3}N_{v,i,t-1}^j \cdot 2^{-j}  }} > 1 \rightarrow \rho_{t,v}(i) \le \sqrt{\frac{3 \ln t}{\frac{2}{3}N_{v,i,t-1}^j \cdot 2^{-j}  }}     } \,.
\end{align}
According to \cite[Lemma 4]{WeiChen2017}, it holds that for given sequence $\set{j_{v,i}^{\max}}_{v \in V,i \in [|N(v)|]}$, 
\begin{align}
	\PP{\cF_{2,t}} \le \sum_{v \in V,i \in [|N(v)|]}\frac{j_{v,i}^{\max}} {t^2} \,. \label{eq:thm1:F2}
\end{align}


Now we are ready to prove the main result of DC-UCB. Define the event $\cF_t = \set{ r(S_t,\bar{p}'_t) < \alpha \cdot \opt_{\bar{p}'_t} }$. Then according to the definition of the $(\alpha,\beta)$-approximation oracle, $\PP{\cF_t} < 1-\beta$ for any $t$.

Since
\begin{align}
	\EE{r(S_t, \bar{p}'_t)} = \EE{r(S_t, \bar{p}'_t)\mid \cF_t} \PP{\cF_t} + \EE{r(S_t, \bar{p}'_t)\mid \cF_t^c} \PP{\cF_t^c} \ge \beta \cdot \EE{r(S_t, \bar{p}'_t)\mid \cF_t^c}\,, \notag
\end{align}
the $(\alpha,\beta)$-scaled regret of DC-UCB can be bounded by
\begin{align}
	R(T) &= \EE{\sum_{t=1}^T \alpha\beta\cdot \opt_p - r(S_t,p) } \notag \\
	&\le \beta\cdot \EE{ \sum_{t=1}^T \alpha\cdot \opt_{p} - r(S_t,p) \mid \cF_t^c } \notag \\
	&\le \EE{ \sum_{t=1}^T \alpha\cdot \opt_{p} - r(S_t,p) \mid \cF_t^c,\cF_{1,t}^c,\cF_{2,t}^c } + \sum_{t=1}^T \PP{\cF_{2,t}}\cdot \Delta_{\max} + \sum_{t=1}^T \PP{\cF_{1,t}}\cdot \Delta_{\max} \notag \\
	&\le \EE{ \sum_{t=1}^T \alpha\cdot \opt_{p} - r(S_t,p) \mid \cF_t^c,\cF_{1,t}^c,\cF_{2,t}^c } + \frac{\pi^2}{3}m\Delta_{\max} + \frac{\pi^2}{6}\sum_{v \in V,i \in [|N(v)|]} j_{v,i}^{\max}\cdot \Delta_{\max} \label{eq:thm1:decompose2} \\
	&= \EE{ \sum_{t=1}^T \Delta_{S_t} \mid \cF_t^c,\cF_{1,t}^c,\cF_{2,t}^c } + \frac{\pi^2}{3}m\Delta_{\max} + \frac{\pi^2}{6}\sum_{v \in V,i \in [|N(v)|]} j_{v,i}^{\max}\cdot \Delta_{\max} \,,\label{eq:thm1:decompose}
\end{align}
where \eqref{eq:thm1:decompose2} sums \eqref{eq:thm1:F1} and \eqref{eq:thm1:F2} over $T$ rounds. 

First, we consider the first part  $\EE{ \sum_{t=1}^T \Delta_{S_t} \mid \cF_t^c,\cF_{1,t}^c,\cF_{2,t}^c }:= (*)$ of \eqref{eq:thm1:decompose}. 

For each pair $(v,i)$, we define a positive real number $M_{v,i}$. And for any seed set $S$, define $M_{S} = \max_{(v,i):P_{v,i}^S > 0 }M_{v,i}$. Specifically if $P_{v,i}^S = 0$ for any $v \in V,i \in [|N(v)|]$, then $M_{S} = 0$. Based on $M_{S_t}$, $(*)$ can be divided into two parts as
\begin{align}
	(*) \le \EE{ \sum_{t=1}^T \Delta_{S_t}\cdot\bOne{\Delta_{S_t} \ge M_{S_t}} \mid \cF_t^c,\cF_{1,t}^c,\cF_{2,t}^c } + \EE{ \sum_{t=1}^T \Delta_{S_t}\cdot\bOne{\Delta_{S_t} < M_{S_t}} \mid \cF_t^c,\cF_{1,t}^c,\cF_{2,t}^c } \label{eq:thm1:DeltaSt:decompose} \,.
\end{align}
Based on event $\cF_t^c,\cF_{1,t}^c,\cF_{2,t}^c$, combining Lemma \ref{lem:p:p'} and Lemma \ref{lem:TPM}, we have
\begin{align}
	\Delta_{S_t} = \alpha\cdot \opt_p - r(S_t,p) &= \alpha \cdot r(S^\opt_p,p) - r(S_t,p) \notag \\
	&\le \alpha\cdot r(S^\opt_p,\bar{p}'_t) - r(S_t,p) \notag \\
	&\le \alpha\cdot r(S^\opt_{\bar{p}'_t},\bar{p}'_t) - r(S_t,p) \notag \\
	&\le r(S_t,\bar{p}'_t) - r(S_t,p)\notag  \\
	&\le r(S_t,\bar{p}_t) - r(S_t,p) \notag \\
	&\le \sum_{v \in V\setminus S_t} \sum_{u \in V_{S_t,v}} \sum_{i \in [|N(u)|]} P_{u,i}^{S_t} \cdot[\bar{p}_u(i) - p_u(i)] \notag \,.
\end{align}
Thus, for the first part of \eqref{eq:thm1:DeltaSt:decompose}, it holds that
\begin{align}
	M_{S_t} \le \Delta_{S_t} &\le \sum_{v \in V\setminus S_t} \sum_{u \in V_{S_t,v}} \sum_{i \in [|N(u)|]} P_{u,i}^{S_t} \cdot[\bar{p}_u(i) - p_u(i)] \notag \\
	&= \sum_{u \in V} \sum_{v \in V}\bOne{u \rightarrow v} \sum_{i \in [|N(u)|]} P_{u,i}^{S_t} \cdot[\bar{p}_u(i) - p_u(i)] \notag \\
	&\le \widetilde{V}\sum_{u \in V} \sum_{i \in [|N(u)|]} P_{u,i}^{S_t} \cdot[\bar{p}_u(i) - p_u(i)]  \notag \\
	&\le -M_{S_t} + 2\widetilde{V}\sum_{u \in V} \sum_{i \in [|N(u)|]} P_{u,i}^{S_t} \cdot[\bar{p}_u(i) - p_u(i)] \notag \\
	&= 2  \widetilde{V}\sum_{u \in V}\sum_{i \in [|N(u)|]} \bracket{ P_{u,i}^{S_t} \cdot[\bar{p}_u(i) - p_u(i)] -  \frac{M_{S_t}}{2\widetilde{V}m}} \notag  \,.
\end{align}
We then desire to bound $P_{u,i}^{S_t} \cdot[\bar{p}_u(i) - p_u(i)] := (\Delta)$. Denote the group index of $S_t$ based on $(u,i)$ by $j_{u,i}$, then it is straightforward by the definition of the action group (Definition \ref{def:group}) that 
\begin{align*}
 	P_{u,i}^{S_t} \le 2\cdot 2^{-j_{u,i}} \,. 
\end{align*}
Also based on the event $\cF_{1,t}^c$ and $\cF_{2,t}^c$, we can get
\begin{align*}
	\bar{p}_u(i) - p_u(i) \le \min\set{1, \sqrt{\frac{6 \ln T}{\frac{1}{3}\cdot N_{u,i,t-1}^{j_{u,i}} \cdot 2^{-j_{u,i}} }}} \,.
\end{align*}
Combining these two, $(\Delta)$ can be bounded by
\begin{align}
	(\Delta) = P_{u,i}^{S_t} \cdot[\bar{p}_u(i) - p_u(i)] \le \min\set{ 2\cdot 2^{-j_{u,i}}, \sqrt{  \frac{ 72\cdot2^{-j_{u,i}}\ln T  }{ N_{u,i,t-1}^{j_{u,i}}  }    } } \,.
\end{align}

Choosing $j_{u,i}^{\max} = \lceil \log_2\frac{4\widetilde{V}m}{M_{u,i}} \rceil$ in event $\cF_{2,t}^c$, we move on to consider different cases of $j_{u,i}$ to bound $(\Delta)$, 
\begin{itemize}
	\item If $j_{u,i} \le j_{u,i}^{\max}$,
	\begin{itemize}
		\item If $N_{u,i,t-1}^{j_{u,i}} = 0$, then $(\Delta) \le 2\cdot 2^{-j_{u,i}} $;
		\item If $0< N_{u,i,t-1}^{j_{u,i}}\le \frac{288\widetilde{V}^2m^2 2^{-j_{u,i}} \ln T }{M_{u,i}^2}:= \ell_{j_{u,i},T}(M_{u,i})$, then $(\Delta) \le \sqrt{  \frac{ 72\cdot2^{-j_{u,i}}\ln T  }{ N_{u,i,t-1}^{j_{u,i}}  }    } $ ;
		\item If $N_{u,i,t-1}^{j_{u,i}} > \ell_{j_{u,i},T}(M_{u,i})$,  then  $(\Delta) \le \frac{M_{u,i}}{2\widetilde{V}m}$.
	\end{itemize}
	\item If $j_{u,i} > j_{u,i}^{\max}$, then $(\Delta) < 2\cdot 2^{-j_{u,i}^{\max}} < \frac{M_{u,i}}{2\widetilde{V}m}$ .
\end{itemize}
Combine all these cases and define 
\begin{align}
	k_{j,T}\bracket{M, s} = \left\{
		\begin{aligned}
			& 4\widetilde{V}\cdot 2^{-j},                &s=0 \,, \\
			& 2\widetilde{V} \sqrt{ \frac{72\cdot 2^{-j}\ln T }{s} } , &0<s \le \ell_{j,T}(M)\,,\\
			& 0,    & s > \ell_{j,T}(M)\,.
		\end{aligned}
	\right.
\end{align}
We can then conclude that 
\begin{align}
	\EE{ \Delta_{S_t}\cdot \bOne{\Delta_{S_t}\ge M_{S_t}} \mid \cF_t^c,\cF_{1,t}^c,\cF_{2,t}^c } &\le 2\widetilde{V}\sum_{u \in V\setminus S_t}\sum_{i \in [|N(u)|]} \bracket{ P_{u,i}^{S_t} \cdot[\bar{p}_u(i) - p_u(i)] -  \frac{M_{S_t}}{2\widetilde{V}m}} \notag\\
	&\le \sum_{u \in V}\sum_{i \in [|N(v)|]} k_{j_{u,i}T}\bracket{M_{u,i},N_{u,i,t-1}^{j_{u,i}}} \,.
\end{align}

Further, consider the first part of $(*)$, 
\begin{align}
	\EE{\sum_{t=1}^T \Delta_{S_t}\cdot \bOne{\Delta_{S_t}\ge M_{S_t}} \mid \cF_t^c,\cF_{1,t}^c,\cF_{2,t}^c } &\le \sum_{t=1}^T \sum_{u \in V}\sum_{i \in [|N(v)|]} k_{j_{u,i}T}\bracket{M_{u,i},N_{u,i,t-1}^{j_{u,i}}} \notag \\
	&= \sum_{u \in V}\sum_{i \in [|N(u)|]} \sum_{j=1}^{\infty} \sum_{s=0}^{N_{u,i,T-1}^{j}} k_{j,T}(M_{u,i},s) \notag  \\
	&\le \sum_{u \in V}\sum_{i \in [|N(u)|]} \sum_{j=1}^{\infty} \bracket{ \sum_{s=0}^{\ell_{j,T}(M_{u,i})} k_{j,T}(M_{u,i},s)  } \label{eq:thm1:lessthanl}  \\
	&= \sum_{u \in V}\sum_{i \in [|N(u)|]} \sum_{j=1}^{\infty} \bracket{ k_{j,T}(M_{u,i},0) + \sum_{s=1}^{\ell_{j,T}(M_{u,i})} k_{j,T}(M_{u,i},s)  } \notag  \\
	&= \sum_{u \in V}\sum_{i \in [|N(u)|]} \sum_{j=1}^{\infty} \bracket{ k_{j,T}(M_{u,i},0) + \sum_{s=1}^{\ell_{j,T}(M_{u,i})} 2\widetilde{V}\sqrt{\frac{72\cdot 2^{-j}\ln T}{s}}  }  \notag  \\
	&\le \sum_{u \in V}\sum_{i \in [|N(u)|]} \sum_{j=1}^{\infty} \bracket{ k_{j,T}(M_{u,i},0) + 4\widetilde{V}\sqrt{72\cdot 2^{-j} \ln T } \sqrt{\ell_{j,T}(M_{u,i})} } \label{eq:thm1:integral}\\
	&= \sum_{u \in V}\sum_{i \in [|N(u)|]} \sum_{j=1}^{\infty} \bracket{ 4\widetilde{V}2^{-j} + \frac{576\cdot 2^{-j} \widetilde{V}^2 m \ln T}{M_{u,i}} } \label{eq:thm1:valuel} \\
	&\le \sum_{u \in V}\sum_{i \in [|N(u)|]}\bracket{ \sum_{j=1}^{\infty}2^{-j} \cdot  \bracket{4\widetilde{V} + \frac{576 \widetilde{V}^2 m \ln T}{M_{u,i}}} } \notag \\
	&= \sum_{u \in V}\sum_{i \in [|N(u)|]} \bracket{4\widetilde{V} + \frac{576 \widetilde{V}^2 m \ln T}{M_{u,i}}} \notag \\
	&= \sum_{u \in V}\sum_{i \in [|N(u)|]} \frac{576 \widetilde{V}^2 m \ln T}{M_{u,i}} + 4\widetilde{V}m \notag \,,
\end{align}
where \eqref{eq:thm1:lessthanl} holds since $k_{j,T}(M,s)=0$ when $s >\ell_{j,T}(M)$, \eqref{eq:thm1:integral} is by the fact that for any natural number $n$, 
\begin{align}
	\sum_{s=1}^n \sqrt{\frac{1}{s}} \le \int_{s=0}^n\sqrt{\frac{1}{s}}ds = 2\sqrt{n} \,,  \notag
\end{align}
and \eqref{eq:thm1:valuel} is by the definition of $\ell_{j,T}(M_{u,i}) = \frac{288\widetilde{V}^2m^2 2^{-j} \ln T }{M_{u,i}^2}$.

For any pair $(u,i)$, let $M_{u,i} = M$ and then by definition $M_{S_t}=M$, where the value of $M$ will be chosen later. Then the second part of \eqref{eq:thm1:DeltaSt:decompose} can be bounded by $TM$. Subsequently, we have
\begin{align}
	R(T) &\le (*) + \frac{\pi^2}{3}m\Delta_{\max} + \frac{\pi^2}{6}\sum_{v \in V,i \in [|N(v)|]} j_{v,i}^{\max}\cdot \Delta_{\max}  \notag \\ 
	&\le \sum_{u \in V}\sum_{i \in [|N(u)|]} \frac{576 \widetilde{V}^2 m \ln T}{M_{u,i}} + 4\widetilde{V}m + TM + \frac{\pi^2}{3}m\Delta_{\max} + \frac{\pi^2}{6}\sum_{v \in V,i \in [|N(v)|]} j_{v,i}^{\max}\cdot \Delta_{\max} \notag \\
	&= \frac{576 \widetilde{V}^2 m^2 \ln T}{M} + 4\widetilde{V}m + TM + \frac{\pi^2}{3}m\Delta_{\max} + \frac{m\pi^2\Delta_{\max}}{6}\log_2\bracket{\frac{4\widetilde{V}m}{M}}  \label{eq:thm1:defj} \\
	&\le 48\widetilde{V}m\sqrt{T\ln T} + 4 \widetilde{V}m + \frac{m\pi^2\Delta_{\max}}{6}\bracket{2+ \frac{\log_2\sqrt{\frac{576\widetilde{V}^2 m^2 \ln T}{T}}}{4 \widetilde{V}m}} \label{eq:thm1:free:result} \\
	&= O\bracket{mn\sqrt{T\ln T}} \,, \notag
\end{align}
where \eqref{eq:thm1:defj} is derived from the definition that $j_{u,i}^{\max} = \lceil \log_2\frac{4 \widetilde{V}m}{M_{u,i}} \rceil$ and $M_{u,i}=M$ for all $u \in V,i \in [|N(u)|]$ and \eqref{eq:thm1:free:result} is by choosing $M = \sqrt{\frac{576\widetilde{V}^2 m^2 \ln T}{T}}$. Above all, we obtain the problem-independent $(\alpha,\beta)$-scaled regret bound of the DC-UCB algorithm. 

For the problem-dependent regret bound, we choose $M_{u,i} = \Delta^{u,i}_{\min}$. Then according to the definition, the event $\Delta_{S_t} \le M_{S_t}$ does not happen any more. Thus the second part of \eqref{eq:thm1:DeltaSt:decompose} is $0$. Then in this case, the $(\alpha,\beta)$-scaled regret can be bounded by
\begin{align}
	R(T) &\le (*) + \frac{\pi^2}{3}m\Delta_{\max} + \frac{\pi^2}{6}\sum_{v \in V,i \in [|N(v)|]} j_{v,i}^{\max}\cdot \Delta_{\max}  \notag \\ 
	&\le \sum_{u \in V}\sum_{i \in [|N(u)|]} \frac{576 \widetilde{V}^2 m \ln T}{M_{u,i}} + 4\widetilde{V}m  + \frac{\pi^2}{3}m\Delta_{\max} + \frac{\pi^2}{6}\sum_{v \in V,i \in [|N(v)|]} j_{v,i}^{\max}\cdot \Delta_{\max} \notag \\
	&= \sum_{v \in V}\sum_{i \in [|N(v)|]} \frac{576 \widetilde{V}^2 m \ln T}{\Delta^{v,i}_{\min}} + 4\widetilde{V}m  + \frac{\pi^2\Delta_{\max}}{6} \sum_{v \in V,i \in [|N(v)|]} \bracket{ 2+\log_2 \frac{4\widetilde{V}m}{\Delta_{\min}^{v,i}} } \notag\\ 
	&= O\bracket{\frac{m^2n^2\ln T}{\Delta_{\min}}}\,. \notag
\end{align}

\end{document}